\def\B{{\mathcal B}}
\def\H{{\mathcal H}}
\def\N{{\mathbb N}}
\def\C{{\mathbb C}}
\def\R{{\mathbb R}}
\def\Z{{\mathbb Z}}
\def\M{{\mathcal M}}
\def\L{{\mathcal L}}
\def\f{{\mathcal F}}
\def\z{{\mathcal Z}}
\def\s{{\mathcal S}}
\def\p{{\mathcal P}}
\newcommand{\ii}{\mathsf{i} }
\newcommand{\ee}{\mathsf{e} }
\newcommand{\cl}{\mathsf{cl} }
\newcommand{\supp}{\mathsf{supp}~ }
\newcommand{\Span}{\mathsf{Span}~ }
\numberwithin{equation}{section}
\theoremstyle{plain}
\newtheorem{theorem}{Theorem}[section]
\newtheorem{proposition}[theorem]{Proposition}
\newtheorem{lemma}[theorem]{Lemma}
\newtheorem{corollary}[theorem]{Corollary}
\newtheorem{definition}  [theorem] {Definition}
\theoremstyle{definition}
\newtheorem{ex}[theorem] {Example}
\newtheorem{example}[theorem] {Example}
\title{On linear differential equations with infinitely many derivatives.}
\author[M. Carlsson, H. Prado, E.G. Reyes]{Marcus Carlsson, Humberto Prado 
and Enrique G. Reyes}
\subjclass[2000]{Primary: 34A35. Secondary: 30D10, 30D15, 34A05, 34A12.}
\keywords{Nonlocal differential equations, convolution equations, Borel transform, functions of exponential type}
\begin{document}

\maketitle

\begin{abstract}
Differential equations with infinitely many derivatives, sometimes also referred to as ``nonlocal'' differential equations, appear frequently in branches of modern
physics such as string theory, gravitation and cosmology. The goal of this paper is to show how to properly interpret and solve such equations, with a special focus 
on a solution method based on the Borel transform. This method is a far-reaching generalization of previous approaches (N. Barnaby and N. Kamran, Dynamics with infinitely 
many derivatives: the initial value problem. {\em J. High Energy Physics} 2008 no. 02, Paper 008, 40 pp.; P. G\'orka, H. Prado and E.G. Reyes, Functional calculus via 
Laplace transform and equations with infinitely many derivatives. {\em Journal of Mathematical Physics} 51 (2010), 103512;  P. G\'orka, H. Prado and E.G. Reyes, The 
initial value problem for ordinary equations with infinitely many derivatives. {\em Classical and Quantum Gravity} 29 (2012), 065017).
In particular we reconsider generalized initial value problems and disprove various conjectures found in the modern 
literature. We illustrate various phenomena that can occur with concrete examples, and we also treat efficient implementations of the theory.
\end{abstract}

\section{Introduction}

Differential equations with infinitely many derivatives appear frequently in modern
physics, and special cases of such ``nonlocal" equations have been extensively
studied in string theory, quantum gravitation theory and cosmology, see for instance
\cite{AV,B,BBC,BBOR,BK1,BK2,CMN1,CMN,D,EW,Moeller,V,VV} and also \cite{Ra,Ta,W}. 
The mathematical study of such equations began over a century ago (see e.g. \cite{ritt}), and there are several methods for finding solutions, most of these based 
on the Fourier and Laplace transforms (see for instance \cite{BK1,BK2,GPR_JMP,GPR_CQG}). Our main contribution in this paper is to generalize these classical approaches 
and to present a method relying on the Borel transform, see Definition 2.2 below. 
Interestingly, this transform was already used in this context in the 1930's by R.D. Carmichael \cite{Car1,Car} and it has reappeared occasionally, see e.g. \cite{Gruman}, 
but we believe it has not received the attention it deserves. We will show that the Borel transform method has several benefits due to its generality and simplicity. 
In addition to solving nonlocal equations in a framework which is more general than the one provided by Fourier and Laplace transform methods, we will show that this method 
combined with the theory of entire functions can be used to give counterexamples of some conjectures found in the modern physics literature, especially 
concerning generalized initial value problems \cite{B,GPR_CQG}. We also include a comparison of our method with other proposals which have appeared in the literature 
\cite{dybinskii, H,H1,Tran}, as well as many examples, providing insight into pros and cons with the various techniques and demonstrating what phenomena 
to expect.

We consider the equation

\begin{equation}\label{a1w}
\phi\left(\frac{d}{dx}\right)f(x)=g(x)
\end{equation}
where $f$ is an unknown function on $\R$, $g$ is a given function
on $\R$, and the symbol $\phi(z)=\sum_{k=0}^\infty a_k z^k$ is an
analytic function. We stress the fact that such equations can be solved using a number of different approaches, in addition to the one mentioned above based 
on the Borel transform. For example, given that $\phi$ is ``nice'', we could consider (\ref{a1w}) as a particular case of a pseudo-differential equation and use 
the distribution theory approach developed by L. Schwartz, L. H\"{o}rmander and others \cite{H}. Alternatively, there exists a theory developed by 
Yu. A. Dubinskii and his collaborators \cite{dub1,dybinskii,Tran}, which allows more general symbols $\phi$'s, but which also relies on the Fourier 
transform. For equations on the half-line there are also Laplace transform methods, see e.g. \cite{BK1,GPR_JMP,GPR_AHP}. However, the suitability of these methods
for a particular application ---including the question of how to interpret the left hand side of (\ref{a1w})--- depends on $\phi$ or on conditions 
on $f$ and $g$ imposed by the physical problem at hand (see for instance \cite{B,BK1,Moeller}). By comparison, we have found that the Borel transform approach
turns out to be quite flexible and therefore of very general applicability.

In this paper we will consider (\ref{a1w}) for $f,g$ in the class $Exp$, that is, restrictions to $\R$ of entire functions of exponential type. As we 
shall see, there are several benefits of working with the class $Exp$. First of all $Exp$ is a rather large space, containing e.g. all band limited functions in 
$L^2(\R)$ as well as all exponential functions. We note that when $\phi$ is a polynomial, i.e. when (\ref{a1w}) reduces to an ordinary differential 
equation, then the span of exponential functions of the form $x^m\ee^{\lambda x}$, (where $\phi(\lambda)=0$ and $m$ is less than the multiplicity of 
$\lambda$), equals the set of homogenous solutions to (\ref{a1w}). A pleasant fact is that these functions are in $Exp$, which is not the case for 
$L^2(\R)$ or more general classes of Sobolev type spaces. Thus, the theory for existence and uniqueness of solutions becomes particularly simple and 
natural in $Exp$. Relying on the Borel transform, we give a simple explicit formula for the solution of (\ref{a1w}), given $g\in Exp$. We will also 
show that the homogenous solutions are obtained by a direct generalization of the situation for ODE's. Summing up, if we know the Borel transform 
of $g$ and the zeroes of $\phi$, all solutions to (\ref{a1w}) in $Exp$ are explicitly given. This is done in Sections \ref{secExp}, \ref{secty}, 
\ref{secPart}, \ref{secHom}. The issue of actually computing the Borel transform in a concrete application is addressed in Section \ref{secImp}. 
Yet another good feature of working with $Exp$ is that there can be no ambiguity as to how $\phi\left(\frac{d}{dx}\right)f(x)$ should be interpreted, 
since we shall see that 

\begin{equation}\label{a2w}
\phi\left(\frac{d}{dx}\right)f=\lim_{K\rightarrow\infty}\sum_{k=0}^K a_k \frac{d^kf}{dx^k},\quad f\in Exp
\end{equation}
with respect to uniform convergence on compact sets. This is in sharp contrast with e.g. the space $C^\infty_{c}$ of infinitely differentiable functions 
with compact support, for we shall prove that the right hand side diverges in $L^1_{loc}$ for all $f\in C^\infty_{c}$ and $\phi(z)=\ee^{z^2}$ 
(Example \ref{e1}). This rules out convergence in any weighted $L^p$-space, and with the same method one can also show that it does not converge in any 
classical Sobolev space. This is remarkable since $\ee^{(\frac{d}{dx})^2}$ does have a physically relevant interpretation in e.g. $L^2(\R)$ (convolution 
with the heat kernel) which arises when considering the heat equation on the line. Since a subset of $Exp$ is dense in $L^2(\R)$ (we can take, for 
instance, the set of Hermite functions) this physically relevant interpretation of $\ee^{(\frac{d}{dx})^2}$ on $L^2(\R)$ is easily derived from the 
general theory of this paper. We refer to Example \ref{b1} for more details. Now, the function $\phi(z)=\ee^{z^2}$ is special because it is a bounded 
function on $\ii\R$, which is an underlying fact for the phenomena discussed above. We shall see in Section \ref{secty} and \ref{sectd} that in order 
to have a general theory for solving (\ref{a1w}) for arbitrary entire functions $\phi$, we are forced to work with functions $f,g$ that are restrictions 
to $\R$ of entire functions, as $Exp$.

After the general theory has been developed in Sections \ref{secExp}, \ref{secty}, \ref{secPart}, \ref{secHom}, we consider particular cases in Sections 
\ref{sectd} and \ref{secCE}. In particular, we show that equations of convolution type can be reformulated as (\ref{a1w}), and hence solved via the 
Borel transform. Section \ref{secREW} is devoted to showing the power of the present approach in comparison with the other methods mentioned initially.  
As we shall see, the various interpretations of the operator $\phi(\frac{d}{dx})$ coincide on common domains of definition. Thus, the question of which method 
to chose will depend on the symbol $\phi$ and on the particular function space one wishes to work in, as well as on the physical nature from which the equation 
has been deduced. 

Section \ref{secini} sheds further light on initial value problems, see \cite{GPR_JMP, GPR_CQG}. A recurring issue in the physics literature is how to 
appropriately impose initial value conditions that uniquely specify the solution, in analogy with the theory of ODE's (\cite{B,BK1,Moeller}). Here, we
use essentially the fact that we are working with real analytic functions, for it allows us to use deep results from the theory of entire functions 
of finite order, to (dis)prove some natural conjectures \cite{B,Moeller}. In particular, we shall show that if the exponent of convergence of the zeroes 
of $\phi$ is $>1$, then any function in $L^2(I)$, in which $I$ is a finite interval, is arbitrarily close in $L^2(I)$ to a solution of the equation 
$\phi\left(\frac{d}{dx}\right)f(x)=0$. 
See Proposition \ref{tgb} for definitions and a precise statement. The paper ends with Section \ref{secImp} treating the issue of how to efficiently 
implement stable solvers based on the methods presented in this paper.

\section{Functions of exponential type}\label{secExp}

We provide a review of key properties of functions of exponential type. This is an extensively studied subject, and many more details can be found e.g. in 
\cite{Boas,Levin}.

\begin{definition} \label{d1}
An entire function $F$ is said to be of exponential type if there exists a $C,\tau>0$ such that $$|F(z)| \leq C \ee^{\tau|z|}.$$
The number $\tau$ is called an exponential bound of $F$, and the infimum of all possible $\tau$'s is called the exponential type of $F$. 
The set of all entire functions of exponential type will be denoted by $Exp(\C)$.
\end{definition}

Examples of functions of exponential type include $\sin(z),~\cos(z)$ as well as all polynomials, but not e.g. $\ee^{z^2}$.
We will denote by $Exp(\R)$ the space of all functions $f$ such that there exists an
$F\in Exp(\C)$ with $f=F|_{\R}$ ($F$ restricted to $\R$). Note that $Exp(\R)$ contains e.g. all band-limited functions in $L^2(\R)$, i.e. all 
functions whose Fourier transform has compact support. This is not hard to deduce by applying the Fourier inversion formula, but it also follows 
from Proposition \ref{p1} and formula (\ref{fourier}) below. When there is no risk of confusion, we will make no distinction between $f$ on $\R$ 
and its extension to $\C$, and simply write $f\in Exp$.

The class $Exp$ will be the domain of the nonlocal differential operators we consider in this paper,
while we will pose no restriction on the class of symbols $\phi$ appearing in our equations, except that they should be entire functions. 
We can give a concrete integral representation of the functions in $Exp$ via the Borel transform which we define as follows:

\begin{definition}\label{d5}
Suppose that $f(z)=\sum_{k=0}^\infty b_k z^k$ belongs to $Exp(\C)$. The
Borel transform of $f$ is defined as
\begin{equation}\label{a10}
\B(f)(z)=\sum_{k=0}^\infty \frac{k!\,b_k}{z^{k+1}}\; ,
\end{equation}
\end{definition}

If $\tau$ is the exponential type of $f$, it is a classical fact (see e.g. Theorem 5.3.1, \cite{Boas}) that the series in (\ref{a10}) converges for all 
$|z|>\tau$, whereas this is false for all smaller discs. Even more, if we let $S$ be the \textit{conjugate diagram of $\B(f)$}, i.e. the smallest convex
set contained in $\{z : |z|\leq \tau\}$ such that $\B(f)$ extends by analyticity outside $S$, then a classical theorem due to Polya tells us that $S$ can 
be characterized in terms of the growth of $f$ along rays emanating from $z=0$. We refer to \cite[Chapter 5]{Boas} for the details, but we remark that this 
is closely connected with the following alternative definition of $\B(f)$ via the Laplace transform $\L$; given $z$ with $|z|>\tau$ we have
\begin{equation}\label{a11}
\B(f)(r\ee^{i\theta})=\ee^{-i\theta}\int_{0}^\infty f(t\ee^{i\theta})\ee^{- rt  }~dt=\ee^{-i\theta}\L\left(f(\cdot ~\ee^{i\theta})\right).
\end{equation}
We note that both (\ref{a10}) and (\ref{a11}) give concrete formulas for calculating the Borel transform
of an explicitly given function $f\in Exp(\C)$. If one only knows $f$ on $\R$, neither of the above formulas can be evaluated, but we will see in 
Section \ref{secImp} how this can be circumvented. The following formula, which says how to
recover $f$ from $\B(f)$, is crucial for our theory. Given any $R>\tau$ we have
\begin{equation}\label{a12}
f(z)=\int_{|\zeta|=R} \ee^{z\zeta}\B(f)(\zeta)~\frac{d\zeta}{2\pi \ii},
\end{equation}
(Theorem 5.3.5, \cite{Boas}).

\begin{ex}\label{b2}
Let $f(x)=\ee^x$. Then $f(z)=\sum_{k=0}^\infty \frac{z^k}{k!}$ and $f(z)$ is obviously of exponential type $\tau =1$. We have, 
$$
\B(\ee^{x})=\sum_{k=0}^\infty \frac{1}{\zeta^{k+1}}=\frac{1}{\zeta}\frac{1}{1-1/\zeta}=\frac{1}{\zeta-1} \; , 
$$
which converges for $|\zeta|>1$, as predicted by the theory. In this case, we note that the convex set
$S$ given by Polya's theorem mentioned above is significantly smaller than $\{\zeta:~|\zeta|\leq 1\}$; obviously
$S=\{1\}$. Now, we can invert $\B\,$: for $R>1$, (\ref{a12}) implies that
\begin{equation}\label{gt}
\ee^x =
\int_{0}^{2\pi} \ee^{xR\ee^{\ii\theta}}\frac{R\ee^{\ii\theta}}{R\ee^{\ii\theta}-1}~\frac{d\theta}{2\pi}
\end{equation}
which can be verified directly by expanding $\ee^{xR\ee^{\ii\theta}}$ and $(1-\ee^{-\ii\theta}/R)^{-1}$ in power series, integrating
and collecting non-zero terms.
\end{ex}

$\Box$

To work with expressions of the type (\ref{a12}), we now introduce an extension of the Fourier-Laplace transform \cite{H,H1}. It will turn out 
convenient (but it is not strictly necessary) to also work with distributions. To define distributions on $\C$ we identify it with $\R^2$ via 
$\zeta=\xi+\ii \eta \leftrightarrow (\xi,\eta)$. We refer to \cite{H}, Sec 2.3 for basic results on distributions of compact support on $\R^n$. 
Recall in particular that any distribution $d$ of compact support automatically has finite order $N$, and that one then can find $C>0$ such that 
\begin{equation}\label{et6}
|\langle d,\tau \rangle|\leq C\sum_{n_1+n_2\leq N} \|\partial_\xi^{n_1}\partial_\eta^{n_2} \tau(\zeta)\|_{L^\infty}
\end{equation}
for all test functions $\tau$.

\begin{definition}
If $d$ is a distribution with compact support in $\C$, we set
\begin{equation}\label{a13}
\p(d)(z) = \langle \ee^{z\zeta} , d \rangle \; ,\quad z\in\C
\end{equation}
where $\zeta$ represents the independent variable on which $d$ acts. If $d=\mu$ is a measure, Equation $(\ref{a13})$ clearly reduces to 
$$\p(\mu)(z)=\int_{\C} \ee^{z\zeta}~d\mu(\zeta).$$
\end{definition}

The following proposition shows that the image of $\p$ equals $Exp$.

\begin{proposition}\label{p1}
If $d$ is a distribution with support in $\{\zeta:~|\zeta|< R\}$, then $\p(d)\in Exp$ and 
\begin{equation}\label{eq4}
|\p(d)(z)|\leq C\ee^{R|z|}
\end{equation} for some $C>0$. Conversely, given any $f\in Exp$ we can pick a measure $\mu_f$ such that $f=\p(\mu_f)$. The measure $\mu_f$ is not unique, 
on the contrary, if $f$ has exponential type $\tau$ and $R>\tau$ is arbitrary, $\mu_f$ can be chosen to have support on $\{\zeta:~|\zeta|=R\}$.
\end{proposition}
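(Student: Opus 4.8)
The plan is to treat the two directions separately, beginning with the forward estimate (\ref{eq4}) and then using the inversion formula (\ref{a12}) for the converse. Throughout I write $\zeta=\xi+\ii\eta$ as in the identification $\C\cong\R^2$.

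For the forward direction I would first argue that $\p(d)(z)=\langle\ee^{z\zeta},d\rangle$ is entire in $z$. Since $\zeta\mapsto\ee^{z\zeta}$ is smooth and $d$ has compact support, the pairing is well defined; holomorphy follows because $z\mapsto\ee^{z\zeta}$ is a holomorphic $C^\infty$-valued map (its difference quotients converge to $\zeta\ee^{z\zeta}$ in the topology of uniform convergence of all $\zeta$-derivatives on compact sets) and $d$ is continuous on $C^\infty$, so the composition is holomorphic; equivalently one checks $\partial_{\bar z}\p(d)=\langle\partial_{\bar z}\ee^{z\zeta},d\rangle=0$. To obtain the bound, I would fix $R'<R$ with $\supp d\subset\{|\zeta|\le R'\}$ together with a cutoff $\chi\in C^\infty_c$ equal to $1$ near $\supp d$ and supported in $\{|\zeta|\le R'\}$, so that $\p(d)(z)=\langle\chi\,\ee^{z\zeta},d\rangle$. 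Applying (\ref{et6}), the elementary identity $\partial_\xi^{n_1}\partial_\eta^{n_2}\ee^{z\zeta}=\ii^{n_2}z^{n_1+n_2}\ee^{z\zeta}$, and the bound $|\ee^{z\zeta}|\le\ee^{|z||\zeta|}\le\ee^{R'|z|}$ valid on $\supp\chi$, gives $|\p(d)(z)|\le C(1+|z|)^N\ee^{R'|z|}$, where $N$ is the order of $d$. Since $R'<R$, the polynomial factor is absorbed into the exponential, yielding (\ref{eq4}); combined with entireness this shows $\p(d)\in Exp$.

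For the converse I would read off the representing measure directly from (\ref{a12}). Given $f\in Exp$ of exponential type $\tau$ and any $R>\tau$, the Borel transform $\B(f)$ is holomorphic on $\{|\zeta|>\tau\}$, hence continuous on a neighborhood of the circle $\{|\zeta|=R\}$. Parametrizing this circle by $\zeta=R\ee^{\ii\theta}$, formula (\ref{a12}) becomes
\[
f(z)=\int_0^{2\pi}\ee^{zR\ee^{\ii\theta}}\,\B(f)(R\ee^{\ii\theta})\,\frac{R\ee^{\ii\theta}}{2\pi}\,d\theta.
\]
This suggests defining $\mu_f$ as the complex measure supported on $\{|\zeta|=R\}$ with density $\B(f)(\zeta)\,\frac{d\zeta}{2\pi\ii}$ along the oriented circle, that is $d\mu_f=\frac{1}{2\pi}\B(f)(R\ee^{\ii\theta})R\ee^{\ii\theta}\,d\theta$. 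Because this density is continuous, $\mu_f$ is a genuine finite complex measure, and the measure case of (\ref{a13}) gives $\p(\mu_f)(z)=\int\ee^{z\zeta}\,d\mu_f(\zeta)=f(z)$. As $R>\tau$ was arbitrary, this simultaneously proves that such a measure exists and that it can be taken supported on $\{|\zeta|=R\}$; non-uniqueness is then immediate, since distinct choices of $R$ produce nonzero measures with disjoint supports yet the same image under $\p$.

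The differentiation-under-the-pairing and the absorption of the polynomial factor are routine. The one place to be careful is that (\ref{et6}) is stated for test functions, whereas $\ee^{z\zeta}$ is not compactly supported; this is exactly what the cutoff $\chi$ repairs, at the cost of harmless Leibniz cross-terms $\partial^a\chi\cdot\partial^b\ee^{z\zeta}$ whose $\chi$-derivatives are fixed bounded constants and whose $\ee^{z\zeta}$-derivatives still obey the same exponential rate. I expect the main (though mild) obstacle to be bookkeeping these cutoff terms while keeping the constant in (\ref{eq4}) clean.
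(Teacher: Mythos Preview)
Your argument is correct and follows essentially the same route as the paper: a cutoff $\chi$ supported in a slightly smaller ball, the seminorm estimate (\ref{et6}) applied to $\chi\ee^{z\zeta}$ to produce a bound of the form $C(1+|z|)^N\ee^{R'|z|}$ with $R'<R$, absorption of the polynomial into the exponential, and for the converse the measure read off directly from the inversion formula (\ref{a12}) on the circle $|\zeta|=R$. Your additional remarks on entireness and on the Leibniz cross-terms from $\chi$ only make the argument cleaner; nothing is missing.
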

\begin{proof}
If $d=\mu$ is a complex measure (of finite variation) the first part follows as
$$|\p(\mu)(z)| \leq \int_{|\zeta|\leq R} \ee^{|z|R}~d|\mu|(\zeta)\leq \ee^{|z|R}\|\mu\|.$$ The corresponding result for distributions is a bit more 
complicated. Let $\chi\in C^\infty(\C)$ be identically $1$ on the support of $d$ and identically $0$ outside $\{\zeta:~|\zeta|< R_1\}$, where $R_1<R$. By 
(\ref{eggg}) it then follows that 
$$|\p(d)(z)|=\left|\left\langle d(\zeta),\ee^{z\zeta}\right\rangle\right|=\left|\left\langle d(\zeta),\chi(\zeta)\ee^{z\zeta}\right\rangle\right|\leq C\sum_{n_1+n_2\leq N} \|\partial_\xi^{n_1}\partial_\eta^{n_2}\chi(\zeta)\ee^{z\zeta}\|_{L^\infty},$$ 
where $C$ is a constant, $N$ is the order of $d$ and the inequality follows by the definition of distributions, see (2.3.1) in \cite{H}. From this it 
easily follows that $$|\p(d)(z)|\leq C_1 (1+|z|^N)\ee^{R_1|z|}$$ for some $C_1>0$. Since $R_1<R$, the inequality (\ref{eq4}) follows. This shows that 
$Im \p\subset Exp$. The reverse inclusion follows by realizing that the integral representation (\ref{a12}) can be rewritten as $f=\p(\mu_{f,R})$ with
the measure $\mu_{f,R}$ defined as 
\begin{equation} \label{mug}
\mu_{f,R}(E)=\int_{\{\theta:~R\ee^{\ii\theta}\in E\}} \B(f)(R\ee^{\ii\theta})~R\ee^{\ii\theta}~\frac{d\theta}{2\pi},\quad E \subset \C
\end{equation}
\end{proof}

Note that for any distribution $d$ with compact support, Equation (\ref{a12}) implies that the function $\p(d)$ can be equivalently
represented by a measure supported on a circle. Despite this, we will see in Section \ref{sectd} that it can be much easier to work 
directly with distributions.

\begin{example}\label{b3}
Continuing Example \ref{b2}, we have $$\ee^x=\p(\delta_1)(x)$$
where $\delta_1$ is the Dirac measure with support at $\{1\}$. On the other hand by (\ref{gt}) and (\ref{mug}) we have that
$\ee^x=\p(\mu_{\ee^x})$ with \begin{equation*}
\mu_{\ee^x}(E)=\int_{\{\theta:~R\ee^{\ii\theta}\in E\}} \frac{R\ee^{\ii\theta}}{R\ee^{\ii\theta}-1}~\frac{d\theta}{2\pi},\quad E \subset \C
\end{equation*} and $R>1$.
\end{example}

$\Box$

\begin{example}\label{b4}
the function 
$f(x)=\cos x$ is a function of exponential type 1 since
$$
\cos x=\frac{1}{2}\left(\ee^{\ii x}+\ee^{-\ii x}\right)=
\p\left(\frac{1}{2}(\delta_{\ii}+\delta_{-\ii})\right) \; ,
$$
in which $\delta_{\pm \ii}$ are Dirac measures with support at $\pm \ii$. 
The Borel transform is easily calculated to
$\B(\cos x)=\frac{1}{2}\left(\frac{1}{\zeta-\ii}+\frac{1}{\zeta+\ii}\right)=\frac{\zeta}{\zeta^2+1}$, and
hence for $R>1$ we have
\begin{equation*}
\cos x=\int_{0}^{2\pi}\ee^{xR\ee^{\ii\theta}}\frac{R\ee^{\ii\theta}}{R\ee^{2\ii\theta}+1}~\frac{d\theta}{2\pi}
\end{equation*}
\end{example}

$\Box$

More generally, letting $Pol$ denote the set of all polynomials, we have that any function of the form
$$
f(x)=\sum_{finite} p_k(x)\ee^{\zeta_k x},\quad p_k\in Pol,~\zeta_k\in\C
$$
is in $Exp$ and can therefore be expressed as (\ref{a12}). However, we remark that a more direct representation is given
by letting $d$ be the distribution
$\sum p_k(-\partial_{\xi})\delta_{\zeta_k}$, (where the independent variable is $\zeta=\xi+\ii\eta$), and using (\ref{a13}). Indeed, 
\begin{align*}
&\p(d)(x)=\left\langle \ee^{x(\xi+\ii\eta)}, \sum p_k(-\partial_{\xi})\delta_{\zeta_k}\right\rangle=\\
&=\sum\left\langle p_{k}(\partial_{\xi})\ee^{x(\xi+\ii\eta)}, \delta_{\zeta_k}\right\rangle=\sum\left\langle p_{k}(x)\ee^{x(\xi+\ii\eta)}, \delta_{\zeta_k}\right\rangle=f(x)
\end{align*}
by standard calculation rules for distributions.

The transform $\p$ is an extension of both the Fourier and Laplace transforms, since the former appears if
we restrict $\p$ to measures $\mu$ supported on $\ii\R$ and the latter if we restrict $\p$ to measures
supported on $\R^-$. More concretely, if $u:\R^+\rightarrow\C$ has compact support and we define the distribution 
$u(-\xi)\delta_0(\eta)$ via $$\left\langle u(-\xi)\delta_0(\eta),\tau(\xi,\eta)\right\rangle=\int u(-\xi)\tau(\xi,0)~d\xi,$$ we get
\begin{equation}\label{laplace}\begin{aligned}
&\p\big(u(-\xi)\delta_0(\eta)\big)(z)=\left\langle u(-\xi)\delta_0(\eta),\ee^{(\xi+\ii\eta) z}\right\rangle=\\&=\int \ee^{\xi z} u(-\xi)~d\xi=\int \ee^{-\xi z} u(\xi)~d\xi=\L(u)(z),
\end{aligned}\end{equation}
where $\L$ denotes the Laplace transform. Clearly $u(-\xi)\delta_0(\eta)$ can be identified with the measure $\mu$ given by
$$\mu(E)=\int_{E\,\cap\, \R} u(-\xi)~d\xi,\quad E\subset\C.$$
Similarly, let $\f$ be the (unitary) Fourier transform. Given $u:\R\rightarrow\C$ with compact support we have 
\begin{equation}\label{fourier}\p\big(\delta_0(\xi)u(-\eta)\big)(z)=\int \ee^{\ii \eta z} u(-\eta) ~d\eta=\sqrt{2\pi} \f(u)(z)
\end{equation} 
and a short argument shows that $\delta_0(\xi)u(-\eta)$ can be identified with the measure
$$\mu(E)=\int_{\ii E\,\cap\, \R} u(\eta)~d\eta,\quad E\subset\C.$$

One could thus consider $\p$ as a mere extension of the Fourier transform and call it something like
"the extended Fourier-Laplace transform" (see \cite{H,H1}). However, we avoid this since both the Fourier and Laplace
transforms are intimately connected with their respective domains of definition, with associated inverses, 
etc. The transform $\p$ is different from the Fourier-Laplace transform in that its argument $d$ is allowed to have support on $\C$ 
as opposed to $\R$, which significantly alters most properties.

\section{Defining the functional calculus of $\frac{d}{dx}$}\label{secty}

In this section we return to the differential equation (\ref{a1w}), and show that the left hand side is well
defined for any $f\in Exp$ in accordance with the limit (\ref{a2w}) (with respect to the topology of uniform convergence on compacts). 
We recall that if $d$ is a distribution and $\phi\in C^\infty(\C)$, we can define a new distribution $\phi d$ via 
$\langle \phi d,\tau\rangle=\langle  d, \phi\tau\rangle$, where $\tau$ represents a test function. If $d=\mu$ is a measure, $\phi\mu$ 
is also a measure and $\int_{\C}\tau~d(\phi\mu)=\int_{\C} \tau \phi~d\mu.$
These facts are standard, see e.g. \cite{cohn}. The main theorem of this section is as follows.

\begin{theorem}\label{c1}
Given an entire function $\phi(z)=\sum_{k=0}^\infty a_k z^k$, $f\in Exp$ and a distribution $d_f$ such that $f=\p(d_f)$,
we have
\begin{equation}\label{a2'}
\lim_{K\rightarrow\infty}\sum_{k=0}^K a_k
\frac{d^kf}{dx^k}=\p(\phi d_f)
\end{equation}
uniformly on compacts. In particular the limit $(\ref{a2'})$ exists, and we will henceforth denote it by
$\phi\left(\frac{d}{dx}\right)f$.
\end{theorem}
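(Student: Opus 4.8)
The plan is to first rewrite each partial sum on the left-hand side of (\ref{a2'}) as a single application of $\p$, and then to estimate the tail using the order-$N$ seminorm bound (\ref{et6}) available for compactly supported distributions. The starting observation is that, since $f=\p(d_f)$ is entire (Proposition \ref{p1}) and its real derivatives along $\R$ coincide with its complex derivatives, one may differentiate under the distributional pairing: because $\partial_x^k\,\ee^{x\zeta}=\zeta^k\ee^{x\zeta}$, the standard rule for differentiating $z\mapsto\langle\ee^{z\zeta},d_f\rangle$ gives
\begin{equation*}
\frac{d^kf}{dx^k}(x)=\langle \zeta^k\ee^{x\zeta},d_f\rangle=\p(\zeta^k d_f)(x),
\end{equation*}
where $\zeta^k d_f$ denotes the product of the smooth function $\zeta\mapsto\zeta^k$ with $d_f$. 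By linearity, writing $\phi_K(\zeta)=\sum_{k=0}^K a_k\zeta^k$ for the $K$-th Taylor partial sum of $\phi$, the $K$-th partial sum on the left of (\ref{a2'}) equals $\p(\phi_K d_f)$. Hence the difference between this partial sum and the claimed limit is $\p\big((\phi_K-\phi)d_f\big)(z)=\langle (\phi_K-\phi)(\zeta)\,\ee^{z\zeta},d_f\rangle$, and the whole problem reduces to showing that this tends to $0$ as $K\to\infty$, uniformly for $z$ in a compact set.

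For the estimate I would fix $R$ with $\supp d_f\subset\{\zeta:|\zeta|<R\}$ and apply the seminorm bound (\ref{et6}) (in its local form, with the supremum taken over a fixed neighbourhood of $\supp d_f$, exactly as the cutoff argument in the proof of Proposition \ref{p1} produces) to the smooth function $\psi_K(\zeta)=(\phi_K-\phi)(\zeta)\,\ee^{z\zeta}$. The key simplification is that $\psi_K$ is \emph{holomorphic} in $\zeta$, so by the Cauchy--Riemann equations each mixed real derivative satisfies $|\partial_\xi^{n_1}\partial_\eta^{n_2}\psi_K(\zeta)|=|\psi_K^{(n_1+n_2)}(\zeta)|$, and a Cauchy estimate on a disc of fixed radius $1$ bounds $|\psi_K^{(m)}(\zeta)|$ by $m!\,\sup_{|w|\le R+1}|\psi_K(w)|$ for $|\zeta|\le R$. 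Since only finitely many derivatives ($n_1+n_2\le N$, with $N$ the order of $d_f$) enter (\ref{et6}), this yields a bound of the form
\begin{equation*}
\big|\p\big((\phi_K-\phi)d_f\big)(z)\big|\le C'\sup_{|w|\le R+1}\big|(\phi_K-\phi)(w)\big|\cdot\sup_{|w|\le R+1}|\ee^{zw}|,
\end{equation*}
with $C'$ depending only on $N$ and $R$.

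Finally, restricting to $|z|\le T$ makes $\sup_{|w|\le R+1}|\ee^{zw}|\le \ee^{T(R+1)}$ a fixed constant, while $\sup_{|w|\le R+1}|(\phi_K-\phi)(w)|\to 0$ because the Taylor series of the entire function $\phi$ converges uniformly on the compact disc $\{w:|w|\le R+1\}$. This gives convergence uniformly on $\{z:|z|\le T\}$, and since $T$ is arbitrary, uniformly on compacts, proving (\ref{a2'}); that the right-hand side lies in $Exp$ is Proposition \ref{p1} applied to the compactly supported distribution $\phi d_f$. The main obstacle is handling the genuinely distributional (rather than merely measure-theoretic) case, where one must control arbitrarily high-order derivatives of $\psi_K$. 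The clean way around this is precisely the holomorphy-plus-Cauchy-estimate step, which collapses all mixed real partials into a single complex derivative and leaves the tail of the Taylor series of $\phi$ on a fixed disc as the only $K$-dependent quantity to be controlled.
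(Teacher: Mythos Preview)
Your argument is correct and follows essentially the same route as the paper: rewrite the partial sums as $\p(\phi_K d_f)$ via the identity $\frac{d^k}{dx^k}\p(d_f)=\p(\zeta^k d_f)$, and then show that $\langle d_f,(\phi_K-\phi)(\zeta)\ee^{z\zeta}\rangle\to 0$ using uniform convergence of the Taylor polynomials on the compact support of $d_f$. The paper treats the measure case explicitly and waves at the distributional case (``by a similar modification as in Proposition \ref{p1}''); your use of holomorphy plus Cauchy estimates to collapse the mixed real partials in (\ref{et6}) to a single sup-norm of $\phi_K-\phi$ is a clean way to make that modification precise.
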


\begin{proof}

If $d_f=\mu_f$ is a measure with
compact support, then the identity
\begin{equation}\label{eggg}\frac{d}{dz}f=\frac{d}{dz}\p(d_f)=\p(\zeta d_f)
\end{equation} 
follows by the Dominated Convergence Theorem. The formula is true also in the general case, and the proof builds on the same estimate we used 
in Proposition \ref{p1}. We omit the details. By repeated use of (\ref{eggg}) we get
\begin{equation*}
\sum_{k=0}^K a_k \frac{d^k\,f}{dz^k}-\p
(\phi\mu_f)  =\p\left(\left(\sum_{k=0}^K a_k \zeta^k-\phi(\zeta)\right)d_f\right)=\left\langle d_f,\left(\sum_{k=0}^K a_k \zeta^k-\phi(\zeta)\right)\ee^{z\zeta}\right\rangle
\end{equation*}

Since $\sum_{k=0}^K a_k \zeta^k$ converges uniformly on compacts to $\phi(\zeta)$, we have that
\begin{align*}
&\lim_{K\rightarrow\infty}\sum_{k=0}^K a_k \frac{d^k\,f}{dz^k}(z)
=  \lim_{K\rightarrow\infty}\sum_{k=0}^K a_k
\frac{d^k\,}{dz^k}\p(\mu_f)(z) = \lim_{K\rightarrow \infty}
\sum_{k=0}^K a_k \frac{d^k}{dz^k} \int_\C  \ee^{z\zeta} d\mu_f(\zeta)= \\
& =\lim_{K\rightarrow \infty} \int_\C  \sum_{k=0}^K a_k \zeta^k
\ee^{z \zeta} d\mu_f(\zeta) = \int_\C  \phi(\zeta) \ee^{z \zeta} d\mu_f(\zeta) = \p
(\phi\mu_f)(z).
\end{align*}
It is also easy to see that the limit is uniform on compact sets. If $d_f$ is a distribution, the result follows by a similar modification as in 
Proposition \ref{p1}.
\end{proof}

For instance,
\begin{equation}\label{e4}
\ee^{y\partial_x} u(x) = \int \ee^{x\zeta} \ee^{y \zeta } d \mu_u (\zeta) = \int \ee^{(x+y)\zeta} d \mu_u (\zeta) = u(x+y)\; .
\end{equation}

If we allow complex $y'$s, (\ref{e4}) indicates that if one wishes to develop a functional calculus for $\frac{d}{dx}$ which includes all entire 
functions, or at least all $\phi$'s of the form (\ref{e4}), then we are forced to operate on a space of functions that are restrictions of entire 
functions. Hence, most classical function spaces like $C(\R)$ or $L^2(\R)$ are not suitable. This statement is further supported by the next example, 
which gives a particular $\phi$ such that the convergence of (\ref{a2w}) is not compatible with functions of compact support. We denote by 
$L^1_{loc}(\R)$ the set of all functions whose restriction to any compact interval $I$ is integrable.

\begin{ex}\label{e1}
\textit{Claim: If $f\in C^\infty(\R)$ has compact support, the sequence $f_K=\sum_{k=0}^K
\frac{\partial_x^{2k} f}{k!}$ does not converge in $L^1_{loc}(\R)$ (hence neither does it converge in any weighted
$L^p(\R)$-space, by H\"{o}lder's inequality) Thus, in $L^1_{loc}(\R)$ we can not define $\ee^{(\partial_x)^2}$ using series.}

First, we recall the Paley-Wiener theorem (see e.g. Theorem 19.3 in \cite{Rudin}), which says that for any
$u \in L^2(\R)$ with compact support the number
\begin{equation} \label{pw} b_u=\limsup_{y\rightarrow \infty}\log|{\mathcal F}(u)(\ii y)|/y \; .
\end{equation}
is finite and equals the right endpoint of the support of $u$. Similarly, the left endpoint $a_u$ of $\supp u$ is given by
$a_u=-\limsup_{y\rightarrow \infty}\log|{\mathcal F}(u)(-\ii y)|/y$.
Now, to prove the claim,
suppose the limit exists in $L^1_{loc}(\R)$ and denote it by $g$. Clearly $\supp f_K\subset \supp f=[a_f,b_f]$ and so
\begin{equation}\label{a80}
a_f\leq \supp g\leq b_f
\end{equation}
and $(f_K)_{K=1}^\infty$ converges to $g$ in $L^1(\R)$. Thus
\[
 \lim_{K \rightarrow \infty}\mathcal \f(f_K)(z)=
\lim_{K \rightarrow \infty} \int_a^b f_K(t) \ee^{-\ii z t}\frac{dt}{\sqrt{2\pi}}= \int_a^b g(t) \ee^{-\ii z t}\frac{dt}{\sqrt{2\pi}} =
{\mathcal F}(g)(z)\; .
\]
On the other hand, by partial integration we obtain
\begin{align*}
&{\mathcal F}(f_K)(z) = \int_a^b \sum_{k=0}^K \frac{\partial_t^{2k}f(t)}{k!}\,\ee^{-\ii z t} \frac{dt}{\sqrt{2\pi}}
=\\&= \sum_{k=0}^K \frac{(-z)^{2k}}{k!} \int_a^b f(t) \ee^{-\ii z t} \frac{dt}{\sqrt{2\pi}}=\sum_{k=0}^K \frac{(-z)^{2k}}{k!}\f(f)(z) \; ,
\end{align*}
and therefore
$$
{\mathcal F}(g)(z) = \ee^{-z^2}{\mathcal F}(f)(z).
$$
Combining (\ref{pw}) and (\ref{a80}) we have
\begin{align*}&
 b_f \geq b_g= \limsup_{y\rightarrow \infty} \frac{\log |{\mathcal F}(g)(\ii y)|}{y}
   = \limsup_{y \rightarrow \infty} \frac{\log | \ee^{y^2} {\mathcal F}(f) (\ii y) |}{y}=\\
   &=  \limsup_{y \rightarrow \infty} ~y+\frac{\log |{\mathcal F}(f) (\ii y) |}{y}=\infty+b_f=\infty .
\end{align*} This contradiction
proves the claim.

$\Box$

\end{ex}

In the above example we worked with the concrete function $\phi(z)=\ee^{z^2}$, which is an entire function of order 2. Based on deep results on 
entire functions, (see \cite{Levin}), it is possible to show that the same contradiction arises for any $\phi$ of order $>1$ with regular growth. 
On the other hand, it is well known that $\ee^{-\partial_x^2}f(x)$ appears in the solution of the heat equation on the line. In this case, 
$\ee^{- \partial_x^2}f(x)$ has the physically correct interpretation $\frac{1}{\sqrt{4\pi}} f*\ee^{-x^2/4}$, which at first seems to be a contradiction. 
This is not the case, as we further discuss in Sections \ref{sectd}, \ref{secCE} and \ref{secREW}. The connection with the heat equation is worked out 
in Example \ref{b1}. In the coming two sections we develop the general theory for solving (\ref{a1w}).

\section{Particular solutions}\label{secPart}
Using the machinery developed in the previous section, the existence of a solution to (\ref{a1w}) is immediate. Given an entire function $\phi$ we 
let $\z(\phi)$ denote the set of zeroes of $\phi$ and $|\z(\phi)|$
the set of their respective modulii.

\begin{theorem}\label{a2''}
Let $\phi$ be an entire function. The equation
\begin{equation}\label{a1'}
\phi\left(\frac{d}{dx}\right)f(x)=g(x), \quad g\in Exp,
\end{equation}
always has at least one solution. More precisely, if $g=\p(d_g)$ and $\supp d_g\cap\z(\phi)=\emptyset$, and then a solution
is given by
\begin{equation}\label{a14}
f=\p\left(\frac{d_g}{\phi}\right).
\end{equation}
\end{theorem}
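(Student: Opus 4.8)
The plan is to read the solution directly off Theorem \ref{c1}, which converts the differential operator $\phi(d/dx)$ into multiplication by $\phi$ on the distribution side of $\p$. Thus, if I can produce a compactly supported distribution $h$ with $\phi h = d_g$, then $f := \p(h)$ will lie in $Exp$ by Proposition \ref{p1} and will satisfy $\phi(d/dx)f = \p(\phi h) = \p(d_g) = g$. The natural candidate is $h = d_g/\phi$, so the entire proof reduces to giving this expression a rigorous meaning and then checking the identity $\phi\cdot(d_g/\phi) = d_g$.

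The first and main task is to define $d_g/\phi$ as a distribution, and this is precisely where the hypothesis $\supp d_g \cap \z(\phi) = \emptyset$ is used. Since $\z(\phi)$ is the closed, discrete zero set of an entire function and $\supp d_g$ is compact, the two sets have strictly positive distance; consequently $1/\phi$ is holomorphic, hence smooth, on an open neighborhood $U$ of $\supp d_g$ on which $\phi$ does not vanish. I would fix a cutoff $\chi\in C^\infty(\C)$ equal to $1$ on a smaller neighborhood of $\supp d_g$ and supported in $U$, set $\psi := \chi/\phi \in C^\infty(\C)$ (extended by $0$ off $U$), and define $d_g/\phi := \psi\, d_g$. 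Because $d_g$ only probes values on its support, where every admissible extension agrees with $1/\phi$, this distribution is independent of the choice of $\chi$. When $d_g$ is a measure the construction is transparent: $d_g/\phi$ is just the finite measure $\phi^{-1}d_g$, since $\phi$ is bounded away from $0$ on $\supp d_g$.

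With $h = \psi d_g$ in hand, $f = \p(h)\in Exp$ by Proposition \ref{p1}, and Theorem \ref{c1} gives $\phi(d/dx)f = \p(\phi\psi d_g)$. It then remains to verify $\phi\psi d_g = d_g$: for any test function $\tau$ one has $\langle \phi\psi d_g,\tau\rangle = \langle d_g,\phi\psi\tau\rangle$, and since $\phi\psi = \chi = 1$ on a neighborhood of $\supp d_g$, the function $\phi\psi\tau - \tau$ vanishes there, so $\langle d_g,\phi\psi\tau\rangle = \langle d_g,\tau\rangle$. Hence $\phi(d/dx)f = \p(d_g) = g$, establishing the ``more precisely'' assertion. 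Finally, to justify the unconditional existence claim I would invoke the last part of Proposition \ref{p1}: a given $g\in Exp$ of exponential type $\tau$ can be written $g = \p(\mu_g)$ with $\mu_g$ supported on a circle $\{|\zeta|=R\}$ for any $R>\tau$, and since the set of moduli $|\z(\phi)|$ is countable I may choose $R\notin|\z(\phi)|$, making the circle disjoint from $\z(\phi)$ and reducing the general case to the one just treated. The only delicate point in the whole argument is the well-definedness of the multiplier $d_g/\phi$, together with its independence of the cutoff, in the genuinely distributional case; everything else is a routine application of the two preceding results.
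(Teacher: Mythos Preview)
Your proposal is correct and follows essentially the same approach as the paper: reduce via Theorem \ref{c1} to the algebraic identity $\phi\cdot(d_g/\phi)=d_g$, then invoke Proposition \ref{p1} to handle the unconditional existence claim by supporting $d_g$ on a circle of radius $R\notin|\z(\phi)|$. In fact you are more careful than the paper, which simply asserts that (\ref{a14}) works by ``associativity of multiplication of functions and distributions'' without spelling out the cutoff construction needed to make sense of the multiplier $1/\phi$ near $\supp d_g$.
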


\begin{proof}
Because of Theorem \ref{c1}, Equation (\ref{a1'}) is equivalent to
\[
\p (\phi d_f) = g \; ,
\]
in which the ``unknown'' is $d_f$. That (\ref{a14}) provides a solution is immediate (since multiplication of functions and distributions is associative, 
see \cite{H}).
\end{proof}

Note that by Proposition \ref{p1} we can always pick $d_g=\mu_g$ to be a measure supported on a circle of radius $R\not\in |\z(\phi)|$, so that the 
condition $\supp d_g\cap\z(\phi)=\emptyset$ is met. In particular, for
every $R\not \in |\z(\phi)|$ greater than an exponential bound $\tau$ of $g$ (recall Definition \ref{d1}), equation
(\ref{a14}) can be explicitly evaluated as
\begin{equation}\label{a14'}
f(x)=\int_{|\zeta|=R} \ee^{x\zeta}\frac{\B(g)(\zeta)}{\phi(\zeta)}~\frac{d\zeta}{2\pi \ii} \; ,
\end{equation}
that is, we take $\mu_g$ as in (\ref{mug}). By Cauchy's theorem and the remarks following Definition \ref{d5}, this argument can be taken one step 
further which we state as a separate corollary. 

\begin{corollary}\label{c7}
Let $\gamma$ be any simply connected closed curve avoiding $\z(\phi)$ with the conjugate diagram $S$ in its interior. Then \begin{equation*}
f(x)=\int_\gamma \ee^{x\zeta}\frac{\B(g)(\zeta)}{\phi(\zeta)}~\frac{d\zeta}{2\pi \ii}
\end{equation*}
solves (\ref{a1'}).
\end{corollary}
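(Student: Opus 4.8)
The plan is to reduce the statement to the two facts already in hand: that $f(x)=\int_{|\zeta|=R}\ee^{x\zeta}\frac{\B(g)(\zeta)}{\phi(\zeta)}~\frac{d\zeta}{2\pi\ii}$ solves (\ref{a1'}) for large circles, as recorded in (\ref{a14'}), and that the integrand lives on the region where $\B(g)$ is holomorphic, namely $\C\setminus(S\cup\z(\phi))$, by Polya's theorem recalled after Definition \ref{d5}. First I would realize the candidate $f$ as $f=\p(d_f)$, where $d_f$ is the measure carried by $\gamma$ with density $\frac{\B(g)(\zeta)}{\phi(\zeta)}$ against $\frac{d\zeta}{2\pi\ii}$. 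This is legitimate because $\phi$ does not vanish on $\gamma$, so the density is continuous, and because $\gamma$ is compact $d_f$ has compact support, whence Proposition \ref{p1} guarantees $f\in Exp$ and Theorem \ref{c1} becomes applicable.

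With $f=\p(d_f)$ at hand, the computation of $\phi\left(\frac{d}{dx}\right)f$ is immediate from Theorem \ref{c1}: we have $\phi\left(\frac{d}{dx}\right)f=\p(\phi\,d_f)$. The decisive observation is that multiplying $d_f$ by $\phi$ cancels the denominator, so $\phi\,d_f$ is simply the measure on $\gamma$ with density $\B(g)(\zeta)$, and therefore
\[
\phi\left(\frac{d}{dx}\right)f(x)=\int_\gamma \ee^{x\zeta}\,\B(g)(\zeta)~\frac{d\zeta}{2\pi\ii}.
\]
The zeroes of $\phi$, the only obstruction to holomorphy coming from the denominator, have now disappeared: the new integrand $\ee^{x\zeta}\B(g)(\zeta)$ is holomorphic on all of $\C\setminus S$.

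It then remains to identify the right-hand side with $g$. The inversion formula (\ref{a12}) gives $g(x)=\int_{|\zeta|=R}\ee^{x\zeta}\B(g)(\zeta)~\frac{d\zeta}{2\pi\ii}$ for every $R$ exceeding the exponential type of $g$. I would choose $R$ large enough that $\gamma$ lies inside $\{|\zeta|<R\}$. Since $S$ lies in the interior of $\gamma$ by hypothesis, the closed region caught between $\gamma$ and the circle $|\zeta|=R$ is contained in $\C\setminus S$, where $\ee^{x\zeta}\B(g)(\zeta)$ is holomorphic. Cauchy's theorem then equates the two positively oriented contour integrals, so $\phi\left(\frac{d}{dx}\right)f(x)=g(x)$, as claimed.

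The argument is essentially bookkeeping once the cancellation of $\phi$ is noticed, so the only points needing care are the orientation conventions and the verification that $\gamma$ together with the large circle bounds a genuinely singularity-free region. I would also stress that no uniqueness is being asserted: two admissible curves may enclose different subsets of $\z(\phi)$ and hence produce particular solutions that differ by a homogeneous solution of (\ref{a1'}). The corollary claims only that each such contour integral is itself a solution, which is exactly what the deformation argument delivers.
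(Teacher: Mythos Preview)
Your argument is correct and matches the paper's intended proof, which is the one-line remark ``By Cauchy's theorem and the remarks following Definition \ref{d5}'' preceding the corollary. The only organizational difference is that the paper first deforms the representation of $g$ (so that $g=\p(d_g)$ with $d_g$ supported on $\gamma$) and then invokes Theorem \ref{a2''}, whereas you define $d_f$ directly, apply Theorem \ref{c1}, and deform at the end; these are the same computation read in opposite directions.
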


Sometimes it is however better to work with distributions than with the above integrals, as the the example below illustrates.

\begin{ex}\label{b5}
In the previous section we noted that
$$\sum p_k(x)\ee^{\zeta_k x}=\p\left(\sum p_k(-\partial_{\xi})\delta_{\zeta_k}\right)$$ for finite
sums. If the right hand side $g$ has this form and $d_g=\sum p_k(-\partial_{\xi})\delta_{\zeta_k}$, then (\ref{a14}) evaluates to
\begin{equation*}
f(x)=\p\left(\frac{\sum p_k(-\partial_{\xi})\delta_{\zeta_k}}{\phi}\right)=
\left\langle\ee^{x(\xi+\ii\eta)},\frac{\sum p_k(-\partial_{\xi})\delta_{\zeta_k}}{\phi}\right\rangle=
\sum\left\langle p_k(\partial_{\xi}) \frac{\ee^{x(\xi+\ii\eta)}}{\phi(\xi+\ii\eta)}, \delta_{\zeta_k}\right\rangle,
\end{equation*}
assuming that $\z(\phi)\cap \{\zeta_k\}=\emptyset$. If this is not the case, we are forced to work with the
more cumbersome expression (\ref{a14'}), or resort to some perturbation analysis. Let us assume that $\z(\phi)\cap \{\zeta_k\}=\emptyset$.
For concrete values of $p_k$ and $\phi$, the last expression above can clearly be evaluated explicitly. We see immediately that the solution $f$ 
will be of the form $f(x)=\sum q_k(x)\ee^{\zeta_k x}$ as well, for some $q_k\in Pol$.

$\Box$

\end{ex}

\section{Homogeneous solutions}\label{secHom}
An immediate question in connection with Theorem \ref{a2''} is whether different $d_g$'s give rise to the
same solution $f$. This is not the case, which is to be expected since if $\phi$ is a polynomial
(so that (\ref{a1'}) is simply a constant coefficient ordinary
differential equation) then the homogenous solutions are linear combinations of exponential functions $\ee^{\lambda x}$ where $\phi(\lambda)=0$.
This points out another pleasant fact when working with $Exp$, as opposed to traditional spaces like $L^2(\R)$ or Sobolev spaces, namely that 
homogenous solutions are inside of the space.

\begin{theorem}\label{c2}
Let $f\in Exp$ be a solution to $\phi\left(\frac{d}{dx}\right)f(x)=0$ of exponential type $\tau$.
Let $\{\zeta_k\}_{k=0}^\infty=\z(\phi)$ be an enumeration of $\z(\phi)$ and let $m_{k}$ denote the multiplicity of
$\zeta_k$. Then there are polynomials $p_{k}$ of degree $<m_{k}$ such that
\begin{equation}\label{a16}
f(x)=\sum_{ |\zeta_k|\leq\tau }p_k(x)\ee^{\zeta_k x} \; .
\end{equation}
\end{theorem}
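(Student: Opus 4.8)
The plan is to translate the homogeneous equation into a condition on the Borel transform $\B(f)$ and then read off the solution by a residue computation in the inversion formula (\ref{a12}). By Proposition \ref{p1} we may write $f=\p(\mu_f)$, where for any fixed $R>\tau$ the measure $\mu_f$ is supported on $\{|\zeta|=R\}$ with density $\B(f)(\zeta)\,\frac{d\zeta}{2\pi\ii}$, as in (\ref{mug}). By Theorem \ref{c1} the equation $\phi\left(\frac{d}{dx}\right)f=0$ is then equivalent to $\p(\phi\mu_f)=0$, that is,
\begin{equation*}
\int_{|\zeta|=R}\ee^{z\zeta}\,\phi(\zeta)\B(f)(\zeta)\,\frac{d\zeta}{2\pi\ii}=0,\qquad z\in\C.
\end{equation*}

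First I would analyze this vanishing condition. Writing $G=\phi\B(f)$, which is holomorphic on $\{|\zeta|>\tau\}$ since $\phi$ is entire and $\B(f)$ is holomorphic there, I expand $\ee^{z\zeta}=\sum_m z^m\zeta^m/m!$ and integrate term by term (legitimate by uniform convergence on the compact circle). The coefficient of $z^m/m!$ becomes $\int_{|\zeta|=R}\zeta^m G(\zeta)\,\frac{d\zeta}{2\pi\ii}$, i.e. the coefficient of $\zeta^{-m-1}$ in the Laurent expansion of $G$ on the annulus $\tau<|\zeta|<\infty$. Hence $\p(\phi\mu_f)=0$ forces every negative Laurent coefficient of $G$ to vanish, so that $G=\phi\B(f)$ extends to an entire function $E$ (a nonnegative power series converging for arbitrarily large $|\zeta|$ has infinite radius of convergence).

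Next I would exploit the identity $\B(f)=E/\phi$. Since the left side is holomorphic on $\{|\zeta|>\tau\}$, this gives a meromorphic continuation of $\B(f)$ to all of $\C$ whose only possible poles are the zeros $\zeta_k$ of $\phi$; because $E$ is holomorphic and $\B(f)$ has no poles in $\{|\zeta|>\tau\}$, every pole lies in $\{|\zeta|\le\tau\}$ and has order at most the multiplicity $m_k$ of $\zeta_k$. As $\phi\not\equiv0$, there are only finitely many such zeros in the compact disc. Finally I would insert this meromorphic $\B(f)$ into (\ref{a12}) with $R>\tau$ (whose circle is automatically pole-free) and evaluate by residues:
\begin{equation*}
f(z)=\int_{|\zeta|=R}\ee^{z\zeta}\B(f)(\zeta)\,\frac{d\zeta}{2\pi\ii}=\sum_{|\zeta_k|\le\tau}\mathrm{Res}_{\zeta=\zeta_k}\big(\ee^{z\zeta}\B(f)(\zeta)\big).
\end{equation*}
Expanding $\ee^{z\zeta}=\ee^{\zeta_k z}\sum_n z^n(\zeta-\zeta_k)^n/n!$ against a pole of order $\le m_k$ shows each residue equals $p_k(z)\ee^{\zeta_k z}$ with $\deg p_k<m_k$, which is precisely (\ref{a16}).

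The main obstacle, and the conceptual heart of the argument, is the equivalence established in the second paragraph: recognizing that the operator statement $\p(\phi\mu_f)=0$ is the same as the purely function-theoretic statement that $\phi\B(f)$ is entire, and hence that $\B(f)$ is a meromorphic perturbation $E/\phi$ with poles controlled by $\z(\phi)$. Once this is in place the residue calculation is routine, but care is needed to justify the meromorphic continuation of $\B(f)$ (which a priori is defined only as a series on $\{|\zeta|>\tau\}$) and to confirm that the pole orders are genuinely bounded by the multiplicities $m_k$, since that is exactly what caps the degrees of the $p_k$.
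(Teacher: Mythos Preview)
Your argument is correct, but it is genuinely different from the paper's. The paper proceeds by duality: it views $\mu_f$ as a continuous linear functional on $A(B_R)$, notes that $\p(\phi\mu_f)=0$ forces $\mu_f$ to annihilate the closed subspace $\M_{\phi,R}=\cl\,\Span\{E_z\phi:z\in\C\}$, and then invokes a separate lemma (Lemma~\ref{c4}) identifying $\M_{\phi,R}$ as the ideal of functions vanishing to the right order at each $\zeta_k$. Since the quotient $A(B_R)/\M_{\phi,R}$ is finite-dimensional and coordinatized by the jets $\{\psi^{(j)}(\zeta_k)\}$, linear algebra forces $\mu_f$ to act as $\psi\mapsto\sum_k p_k(d/d\zeta)\psi|_{\zeta_k}$, and evaluating on $\psi=\ee^{x\zeta}$ gives (\ref{a16}).

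Your route bypasses Lemma~\ref{c4} entirely: you read $\p(\phi\mu_f)=0$ as the vanishing of all negative Laurent coefficients of $\phi\B(f)$ on $\{|\zeta|>\tau\}$, conclude that $\phi\B(f)$ extends to an entire function, and then recover $f$ from the inversion formula (\ref{a12}) by residues. This is more elementary and more explicit---the polynomials $p_k$ come out directly as residue data of $E/\phi$---and it makes the meromorphic structure of $\B(f)$ transparent. The paper's approach, by contrast, is more structural: it isolates the relevant closed ideal in $A(B_R)$ and would generalize more readily to settings where one replaces $\{E_z\}$ by another family with analogous density properties. Both arguments hinge on the same underlying fact (that $\phi\B(f)$ has no principal part at infinity), but they package it differently.
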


Note that the sum in (\ref{a16}) is necessarily finite, since zeroes of analytic functions are discrete.
Also note that if the zeroes of $\phi$ have multiplicity 1 then (\ref{a16})
reduces to
$$
f(x)=\sum_{ |\zeta_k|\leq\tau }c_k\ee^{\zeta_k x},\quad c_k\in\C.
$$
The proof of the theorem is given at the end of this section. We first note the following immediate corollaries to Theorem \ref{a2''} and Theorem \ref{c2}.

\begin{corollary}\label{c6}
$\phi\left(\frac{d}{dx}\right)$ is invertible in $Exp$ if and only if the symbol $\phi$ has no zeroes. In this case,
$$\left(\phi\left(\frac{d}{dx}\right)\right)^{-1}=\frac{1}{\phi}\left(\frac{d}{dx}\right).$$
\end{corollary}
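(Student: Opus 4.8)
The plan is to read invertibility of $\phi\left(\frac{d}{dx}\right)$ as bijectivity of this operator on $Exp$, to prove the two implications separately, and to lean on Theorem \ref{a2''} for surjectivity and on Theorem \ref{c2} for injectivity. First I would settle the ``only if'' direction by contraposition. Suppose $\phi$ has a zero $\lambda\in\z(\phi)$. Then $\ee^{\lambda x}=\p(\delta_\lambda)$ belongs to $Exp$ (it has exponential type $|\lambda|$), and Theorem \ref{c1} gives $\phi\left(\frac{d}{dx}\right)\ee^{\lambda x}=\p(\phi\,\delta_\lambda)=\phi(\lambda)\,\p(\delta_\lambda)=0$. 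Since $\ee^{\lambda x}\neq 0$, the operator has nontrivial kernel and therefore fails to be injective; in particular it is not invertible. Hence invertibility forces $\z(\phi)=\emptyset$.

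For the converse I would assume $\z(\phi)=\emptyset$. The crucial observation is that $1/\phi$ is then itself an entire function, so it is a legitimate symbol for the functional calculus of Section \ref{secty}. Injectivity is immediate from Theorem \ref{c2}: any $f\in Exp$ with $\phi\left(\frac{d}{dx}\right)f=0$ is a finite sum indexed by $\{k:|\zeta_k|\leq\tau\}\subset\z(\phi)=\emptyset$, i.e. the empty sum, so $f=0$. For surjectivity I would invoke Theorem \ref{a2''}: given $g=\p(d_g)\in Exp$, the hypothesis $\supp d_g\cap\z(\phi)=\emptyset$ holds vacuously, so $f=\p\left(d_g/\phi\right)$ solves the equation, the distribution $d_g/\phi$ being well defined since $1/\phi\in C^\infty(\C)$ multiplies the compactly supported $d_g$. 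Thus $\phi\left(\frac{d}{dx}\right)$ is a bijection of $Exp$.

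It then remains to identify the inverse. Applying Theorem \ref{c1} with the entire symbol $1/\phi$ yields $\frac{1}{\phi}\left(\frac{d}{dx}\right)\p(d_h)=\p\left(\frac{1}{\phi}d_h\right)$, and composing the two operators in either order on any $h=\p(d_h)\in Exp$ returns $\p\left(\frac{1}{\phi}\cdot\phi\,d_h\right)=\p(d_h)=h$ and $\p\left(\phi\cdot\frac{1}{\phi}\,d_h\right)=\p(d_h)=h$, using that multiplying a compactly supported distribution by smooth functions is associative and commutative (see \cite{H}). Hence $\frac{1}{\phi}\left(\frac{d}{dx}\right)$ is a genuine two-sided inverse. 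I expect the only delicate point to be the appeal to $1/\phi$ being entire: this is exactly where the hypothesis $\z(\phi)=\emptyset$ is used, and it is the single fact that drives the whole argument, everything else being a direct application of the theorems already in hand.
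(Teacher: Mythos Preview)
Your proof is correct and follows exactly the approach the paper intends: the paper presents this as an immediate corollary of Theorems \ref{a2''} and \ref{c2} without spelling out any details, and your argument is precisely the natural elaboration of that remark, using Theorem \ref{c2} for injectivity, Theorem \ref{a2''} for surjectivity, and Theorem \ref{c1} (plus the observation that $1/\phi$ is entire when $\z(\phi)=\emptyset$) to identify the inverse.
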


By Proposition \ref{p1} we also easily get
\begin{corollary}\label{c3}
Let $f_1=\p(\mu_1)$ and $f_2=\p(\mu_2)$ be solutions to the
non-homogeneous equation $(\ref{a1'})$, where the measures (or
distributions) $\mu_1,\mu_2$ have compact support. Set
$R=\sup\{|z|: ~z\in \supp\mu_1\cup\mu_2\}$. Then $f_1-f_2$ is
given by $(\ref{a16})$ with $\tau=R$.
\end{corollary}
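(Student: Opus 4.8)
The plan is to reduce the corollary to the two main results already in hand, namely Theorem \ref{c2} on the structure of homogeneous solutions and Proposition \ref{p1} on the exponential type of $\p(d)$. The starting observation is that the operator $\phi\left(\frac{d}{dx}\right)$ is linear on $Exp$: by Theorem \ref{c1} it is given by $f \mapsto \p(\phi\, d_f)$, and both the multiplication $d \mapsto \phi\, d$ and the transform $\p$ are linear. Setting $h = f_1 - f_2$, which lies in $Exp$ since $Exp$ is a vector space, we therefore get $\phi\left(\frac{d}{dx}\right) h = \phi\left(\frac{d}{dx}\right) f_1 - \phi\left(\frac{d}{dx}\right) f_2 = g - g = 0$, so that $h$ is a homogeneous solution of (\ref{a1'}).

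Next I would control the exponential type of $h$. Using linearity of $\p$ again, $h = \p(\mu_1) - \p(\mu_2) = \p(\mu_1 - \mu_2)$, and the support of $\mu_1 - \mu_2$ is contained in $\supp \mu_1 \cup \supp \mu_2 \subset \{z : |z| \leq R\}$. For any $R' > R$ the support then lies in $\{|\zeta| < R'\}$, so Proposition \ref{p1} furnishes a constant $C$ with $|h(z)| \leq C \ee^{R'|z|}$; thus $R'$ is an exponential bound of $h$ for every $R' > R$, and letting $R' \downarrow R$ shows that the exponential type $\tau$ of $h$ satisfies $\tau \leq R$.

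Finally I would apply Theorem \ref{c2} to the homogeneous solution $h$ of exponential type $\tau$, obtaining polynomials $p_k$ of degree $< m_k$ with $h(x) = \sum_{|\zeta_k| \leq \tau} p_k(x)\ee^{\zeta_k x}$. Since $\tau \leq R$, every zero contributing to this sum satisfies $|\zeta_k| \leq R$, and the zeroes with $\tau < |\zeta_k| \leq R$ simply do not appear; recording these formally by assigning them the zero polynomial rewrites $h$ exactly as (\ref{a16}) with $\tau = R$, which is the assertion.

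I do not expect a genuine obstacle, as the corollary is essentially a bookkeeping consequence of the preceding results. The only point demanding a moment's care is that the exponential type of $h$ may be strictly smaller than $R$, so that (\ref{a16}) \emph{with $\tau = R$} must be read as a sum over $|\zeta_k| \leq R$ in which the surplus terms vanish. The single ingredient used implicitly is the linearity of $\phi\left(\frac{d}{dx}\right)$ on $Exp$, which is immediate from the representation in Theorem \ref{c1}.
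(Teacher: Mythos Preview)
Your argument is correct and matches the paper's own treatment: the corollary is stated there as an immediate consequence of Theorem \ref{c2} together with Proposition \ref{p1}, and your write-up simply spells out the (straightforward) details of that deduction.
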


To prove Theorem \ref{c2}, we first need a basic lemma. Fix $R>0$ and set $B_R=\{\zeta:~|\zeta|\leq R\}$. Let $A(B_R)$ denote the set of continuous 
functions that are analytic inside $B_R$, endowed with the supremum norm. Also let $E_z\in A(B_R)$ denote the function $E_z(\zeta) = \ee^{\zeta z}$, 
$\zeta \in B_R$. Given an entire function $\phi$ we set
$$\M_{\phi,R}=\cl\left(\Span\left\{ E_z\, \phi :~z\in\C \right\}\right),$$
where $\cl$ denotes the closure in $A(B_R)$.

\begin{lemma}\label{c4}
Let $\phi$ be an entire function and let $R\not \in |\z(\phi)|$. Let
$\{\zeta_k\}_{k=1}^K$ be an enumeration of $\z(\phi)\cap B_R$ and
let $m_k$ denote their corresponding multiplicities. Then
\begin{align*}
\M_{\phi,R}=\{\psi\in A(B_R):~\psi\text{ is zero at $\zeta_k$ with
multiplicity $\geq m_k$, }1\leq k\leq K\}.
\end{align*}
\end{lemma}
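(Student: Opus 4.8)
The plan is to reformulate the entire statement in terms of the closed principal ideal generated by $\phi$ in the uniform algebra $A(B_R)$. Write $\I_\phi$ for the right-hand side, i.e. the set of $\psi\in A(B_R)$ vanishing at each $\zeta_k$ to order at least $m_k$. First I would record the easy inclusion $\M_{\phi,R}\subseteq\I_\phi$: since $E_z$ is zero-free, each generator $E_z\,\phi$ vanishes to order exactly $m_k$ at $\zeta_k$, hence lies in $\I_\phi$; and $\I_\phi$ is a closed subspace, because each condition $\psi^{(j)}(\zeta_k)=0$, $j<m_k$, is given by a bounded functional via Cauchy estimates at the \emph{interior} point $\zeta_k$ (interior precisely because $R\notin|\z(\phi)|$). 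Taking closures of spans then gives $\cl\Span\{E_z\,\phi\}\subseteq\I_\phi$.

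The core of the argument is to show $\M_{\phi,R}=\cl(\phi\,A(B_R))$ and then that this closure already equals $\I_\phi$. For the first identity I would prove that the closed linear span of $\{E_z:z\in\C\}$ is all of $A(B_R)$. Since $z\mapsto E_z$ is an entire $A(B_R)$-valued function, its Maclaurin coefficients $\zeta^n/n!$ lie in the closed span of its values: realize $\frac1{n!}\partial_z^nE_z|_{z=0}$ as a norm-convergent Cauchy integral $\frac1{2\pi\ii}\oint E_z\,z^{-n-1}\,dz$, which is a limit of Riemann sums that are finite combinations of the $E_z$. Thus every monomial, hence every polynomial, lies in $\cl\Span\{E_z\}$, and polynomials are dense in the disc algebra $A(B_R)$ (dilate toward the boundary and expand in Taylor series). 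As multiplication by $\phi$ is a bounded operator on $A(B_R)$ (because $\phi$ is bounded on the compact $B_R$), it carries the dense set $\Span\{E_z\}$ into $\Span\{E_z\,\phi\}$, whence $\phi\,A(B_R)\subseteq\M_{\phi,R}\subseteq\cl(\phi\,A(B_R))$, giving $\M_{\phi,R}=\cl(\phi\,A(B_R))$.

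It remains to identify $\phi\,A(B_R)$. Here I would factor $\phi=p\,g$, where $p(\zeta)=\prod_{k=1}^K(\zeta-\zeta_k)^{m_k}$ collects exactly the zeros of $\phi$ inside $B_R$ and $g=\phi/p$ is entire with no zeros on $B_R$ (its remaining zeros have modulus $>R$, again using $R\notin|\z(\phi)|$). Then $g$ is invertible in $A(B_R)$, so $\phi\,A(B_R)=p\,A(B_R)$. A removable-singularity argument then yields $p\,A(B_R)=\I_\phi$ exactly: if $\psi\in\I_\phi$ then $\psi/p$ extends analytically across each interior zero $\zeta_k$ and is continuous on $B_R$ (as $p\neq0$ on $|\zeta|=R$), so $\psi/p\in A(B_R)$ and $\psi=p\,(\psi/p)\in p\,A(B_R)$. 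Consequently $\phi\,A(B_R)=\I_\phi$ is already closed, and $\M_{\phi,R}=\cl(\phi\,A(B_R))=\I_\phi$, which is the claim.

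I expect the only genuine obstacle to be the first reduction, namely proving that the monomials, and hence all of $A(B_R)$, lie in the closed span of the exponentials $E_z$; the vector-valued Cauchy-integral device is what makes this clean. Once that and the factorization $\phi=p\,g$ are in place, no real approximation is required, since the candidate ideal $\I_\phi$ turns out to coincide with the honest (already closed) ideal $p\,A(B_R)$.
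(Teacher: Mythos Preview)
Your proof is correct and follows essentially the same route as the paper: both arguments reduce to showing that polynomials lie in $\cl\Span\{E_z\}$, that polynomials are dense in $A(B_R)$, and that every $\psi\in\I_\phi$ can be divided by $\phi$ (or by its zero-carrying factor) inside $A(B_R)$. The differences are purely organizational. The paper obtains monomials by noting that $\cl\Span\{E_z\}$ is a subalgebra containing $1$ and $\zeta=\lim_{z\to 0}(E_z-1)/z$, whereas you use a vector-valued Cauchy integral; both work. For the reverse inclusion, the paper divides $\psi/\phi$ directly and then approximates this quotient by polynomials, while you factor $\phi=pg$ with $g$ invertible in $A(B_R)$ and observe that $\phi\,A(B_R)=p\,A(B_R)=\I_\phi$ is already closed, so no further approximation is needed. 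Your formulation is slightly cleaner in that it makes the ideal structure explicit and avoids one approximation step, but the underlying content is the same.
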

\begin{proof}
Since
point evaluations are continuous functionals on $A(B_R)$, it follows that $\M_{\phi,R}$ is included in the set to the right. Conversely, let $\psi$ be
in this set. Then $\psi/\phi\in A(B_R)$ and it is well known that $Pol$ is dense in $A(B_R)$. (To see this, first show that a given function 
$\tau\in A(B_R)$ can be approximated arbitrarily well by a dilation $\tau_r=\tau(r\cdot)$, $r<1$, and then use that the Taylor series of an analytic 
function converges uniformly on compacts). Thus, given
any $\epsilon>0$ there is a $p\in Pol$ with $\|\psi/\phi - p
\|_{A(B_R)}<\epsilon$ and it follows that $\psi\in
\cl\left(\phi Pol\right)$. Moreover,
$$
Pol\subset \cl\left(span\left\{ E_z:~z\in\C \right\}\right).
$$
To see this, note that the right hand side is an algebra which
contains 1 and $\zeta$, which is easily seen by considering the
limit of $(\ee^{z\zeta}-1)/z$ as $z\rightarrow 0$. Thus
$$
\psi\in \cl\left(\phi\, Pol \right)\subset
\cl\left( span \left\{ E_{z}\,\phi :~z\in\C \right\}\right) \; ,
$$
as desired.
\end{proof}

\textit{Proof of Theorem \ref{c2}.} Since $|\z(\phi)|$ is discrete we can find an $R>\tau$ such that there are no zeroes in $\tau<|\zeta|\leq R$. 
By Proposition \ref{p1}
we can write $f=\p(\mu)$ where $\mu$ is a measure supported on $\{\zeta:~|\zeta|=R\}$. By Theorem \ref{c1} we have $\p(\phi\mu)\equiv 0$. This
implies that $\mu$ defines a continuous linear functional on $A(B_R)$ which annihilates $\M_{\phi,R}$. By
Lemma \ref{c4} it follows that
$A(B_R)/\M_{\phi,R}$ is finite dimensional and that an equivalence class is uniquely specified by the values 
$$\left\{\frac{d^j}{d\zeta^j}\psi(\zeta_k):~1\leq k\leq K,~0\leq j<m_k\right\}.$$ 
It thus follows by linear algebra that for any linear functional $l$ on $A(B_R)/\M_{\phi,R}$ there are $p_k\in Pol$ of degree $<m_k$ such that
\begin{equation*}
l(\psi)=\sum_{k=1}^K p_k\left(\frac{d}{d\zeta}\right)\psi \Big|_{\zeta_k}.
\end{equation*}
In particular this is true for the functional induced by $\mu$ and thus
$$
f(x)=\p(\mu)(x)=\int \ee^{x\zeta}~d\mu(\zeta)=
\sum_{k=1}^K p_k\left(\frac{d}{d\zeta}\right)\ee^{x\zeta}\Big|_{\zeta_k}=
\sum_{k=1}^K p_k(x)\ee^{x\zeta_k},
$$
which is what we wanted to prove.

$\Box$

Note that if $\phi(z)=\sum_{k=0}^\infty a_k z^k$ has infinitely many zeroes $\{\zeta_k\}_{k=1}^\infty$,
then it is very likely that we can choose non-zero coefficients $c_k$ such that
$$
f(x)=\sum_{k=1}^\infty c_k\ee^{x\zeta_k}
$$
defines a solution to $\phi\left(\frac{d}{dx}\right)f(x)=0$ in some sense. However, the above function will not be in $Exp$, and hence this falls 
outside the analysis presented here. One way to work with larger spaces than $Exp$ in a general framework is to consider $\p(\mu)$ for measures $\mu$ 
that do not have compact support, but that decay fast enough that $$\p(\phi\mu)(z)=\int \ee^{z\zeta}\phi(\zeta)~d\mu(\zeta)$$ is convergent for all 
$z\in\C$. Such approach has been anticipated e.g. by Carmichael (see \cite{Car}, Sec. 4, references therein) and will be considered elsewhere. 

We will devote the remainder of this paper to examples (Sections \ref{sectd} and \ref{secCE}), a comparison with other methods (Section \ref{secREW}), 
a discussion of initial value problems (Section \ref{secini}) and the issue of how to practically implement the theory developed above 
(Section \ref{secImp}).

\section{Translation-differential equations}\label{sectd}
We consider some special classes of symbols $\phi$'s in this section and the next, beginning with functions of the form
\begin{equation}\label{er5}
\phi(z)=\sum_{k=1}^n p_k(z)\ee^{\zeta_k z},\quad p_k\in Pol,~\zeta_k\in\C.
\end{equation}
We call the corresponding equation $\phi(\frac{d}{dx})f=g$ a translation-differential equation, since
it is easily seen (recall (\ref{e4})) that the equation reduces to
\begin{equation}\label{a18}
\sum p_k\left(\frac{d}{dx}\right)f(x+\zeta_k)=g(x),
\end{equation}
and is thus a combination of classical ordinary differential equations and translation equations.

\begin{example}\label{b6}
Set $\phi(z)=2z\cosh(z)=z(\ee^{z}+\ee^{-z})$. Then (\ref{a18}) reduces to \begin{equation}\label{ex1}f'(x+1)+f'(x-1)=g(x).\end{equation}
Returning to the discussion at the end of Section \ref{secty}, This example shows that for \textit{particular choices of} $\phi$, $Exp$ may be 
unnecessarily restrictive as domain of the operator $\phi(\frac{d}{dx})$. For instance, $C^1(\R)$ would be a natural environment to work with
in the present case. 

\end{example}

$\Box$

We now consider an even simpler example. $L^2_{loc}(\R)$ will denote the set of functions which are in $L^2(I)$ when restricted to any compact 
interval $I\subset\R$.

\begin{example}\label{e2}
Put $\phi(z)=\ee^{z}-1$. The corresponding equation (\ref{a18}) with $g\equiv 0$ is $$f(x+1)-f(x)=0.$$ This equation makes perfect sense e.g. in 
$L^2_{loc}$, and clearly the solutions in this space are all functions with period 1.
Since $\z(\phi)=\{\ii 2\pi k\}_{k\in\Z}$, Theorem \ref{c3} shows that any solution in $Exp$ is of the
form $$\sum_{finite} c_k\ee^{\ii 2\pi k x}, \quad c_k\in\C,$$ which indeed is a function of period 1.
Now, if we allow infinite sums above, then, by standard Fourier-series on an interval, we can reach any
1-periodic function in $L^2_{loc}(\R)$. Returning to the discussion at the end of Section \ref{secHom}, we see that Theorem \ref{c2} can be used 
to find solutions to $\phi(\frac{d}{dx})f=0$ outside of $Exp$ given that $\phi(\frac{d}{dx})$ has a proper interpretation in the larger space in 
question. We will leave a formal study of when this is possible for a separate work.
\end{example}

$\Box$

\section{Convolution equations}\label{secCE}

As a second example of a particular class of symbols $\phi$'s, fix $u\in L^1(\R)$ with compact support and let us consider 
$\phi(z)=\int_{\R}\ee^{-z x} u(x)~dx$. Given $f\in Exp$ we then have 
\begin{equation}\label{tyh}
\begin{aligned}
&\phi(\frac{d}{dx})f(x)=\p(\phi \mu_f)(x)=\int_{\C} \ee^{x\zeta} \phi(\zeta)~d\mu_f(\zeta)=\\
&=\int_{\C} \ee^{x\zeta} \int_{\R}\ee^{-\zeta y}u(y)~dy~d\mu_f(\zeta)=\int_{\R}u(y)\int_{\C}\ee^{\zeta (x-y)}~d\mu_f(\zeta)~dy=\\
&=\int_{\R}u(y)f(x-y)~dy=u* f(x),
\end{aligned}\end{equation} 
where the use of Fubini's theorem is allowed due to the compact support of both $u$ and $\mu_f$. In other words, {\em all convolution equations where 
the convolver has compact support fits in the general theory developed in the earlier sections}. This example clearly demonstrates the non-local nature 
of $\phi(\frac{d}{dx})$, even when $\phi$ is a function of exponential type. For $\phi$ as above, one natural space in which to consider the equation 
\begin{equation}\label{aga}\phi(\frac{d}{dx})f=0\end{equation} 
would be $C(\R)$, the space of all continuous functions with the topology of uniform convergence on compacts. It is well known that its dual is the set 
of distributions with compact support. By Theorem \ref{c2} we easily get that the closure in $C(\R)$ of $\Span\{\ee^{\zeta x}\}_{\z(\phi)}$ satisfies 
$u*f=0$. That no other functions in $C(\R)$ satisfy this is true but difficult to show. It was proved by L. Schwartz, see \cite{schwartz}.

\begin{example}\label{e5}
We start with an easy example, namely $u=\chi_{[-1,0]}$, and suppose we wish to find all solutions to $u*f=0$ in $L^2_{loc}$, not just in $Exp$. Then 
$\phi(z)=\frac{\ee^{z}-1}{z}$ with zeroes $\{2\pi\ii k\}_{k\in\Z\setminus\{0\}}$, so the solutions in $Exp$ are 
$\{\ee^{2\pi\ii k x}\}_{k\in\Z\setminus\{0\}}$. Upon taking the closure in $L^2_{loc}$ we get all functions with period 1 that are orthogonal to 
$\chi_{[0,1]}$. That this set is precisely the set of all solutions to the original equation $\chi_{[-1,0]}*f=0$ is a consequence of Beurling's theorem 
(\cite{koosis}) and a short argument, we omit the details.
\end{example}

\begin{example}\label{e7}
Consider the equation \begin{equation}\label{e6}f=\Delta u*f+g\end{equation} in one variable, (i.e. $\Delta=\frac{d^2}{dx^2}$), where $f$ is the unknown 
function. If $u$ has compact support and $f\in Exp$ then
\begin{align*}
&\Delta u*f=\Delta (u*f)=\Delta_x \int_{\C} \ee^{x\zeta} \int_{\R}\ee^{-\zeta y}u(y)~dy~d\mu_f(\zeta)=\\
&= \int_{\C} \zeta^2\ee^{x\zeta} \int_{\R}\ee^{-\zeta y}u(y)~dy~d\mu_f(\zeta)=\p\left(\zeta^2\int_{\R}\ee^{-\zeta y}u(y)~dy~\mu_f(\zeta)\right).
\end{align*} 
Hence, setting $\phi(z)=z^2 \int_{\R}\ee^{-z x} u(x)~dx$, we have by Theorem \ref{c1} that
$$\Delta u*f=\p(\phi\mu_f)=\phi(\frac{d}{dx})f.$$ Setting $\psi=1-\phi$, we get that (\ref{e6}) is equivalent with $\psi(\frac{d}{dx})f=g$, which thus 
can be solved by Theorem \ref{a2''}, given that $g\in Exp$. Moreover, Theorem \ref{c2} tells us that the zeroes of $\psi$ determines the homogenous 
solutions in $Exp$ and, as before, we can retrieve more solutions by taking limits in some appropriate topology.
\end{example}

\section{Review of other methods and examples}\label{secREW}

We now compare the methods of this paper with other classical approaches to the development of a functional calculus for $\frac{d}{dx}$.
The standard approach is to notice that
$-\ii \frac{d}{dx}$ is an unbounded self-adjoint operator on e.g. $L^2(\R)$, and therefore the
spectral theorem provides us with a functional calculus. The corresponding spectral projection operator turns out to be the Fourier transform and 
hence, given any $\psi \in L^\infty(\R)$ we get (by definition)
\begin{equation}\label{a3}
\psi\left(-\ii\frac{d}{dx}\right)f(x)=\f^{-1}(\psi\f(f))=
\frac{1}{\sqrt{2\pi}}\int \psi(\xi)\widehat{f}({\xi})\ee^{\ii x\xi}~d\xi.
\end{equation}
If we allow $\psi$ to be unbounded, the same definition works but we get convergence
issues in the integral (\ref{a3}). There are various ways of dealing with this. The operator-theoretical
standpoint is that $\psi\left(-\ii\frac{d}{dx}\right)$ is a densely
defined operator, and hence the domain of definition is restricted to those $f$'s such that
$\psi(\xi)\widehat{f}({\xi})\in L^2(\R)$.  
The second possibility is to assume that $\psi$ is a Schwartz
distribution, $\psi\in\s'$, and $f\in \s$ (or vice versa). In this case, the integral in (\ref{a3}) is symbolic but
the theory is on solid ground. Moving one step further we can let $\psi$ depend on $x$ as well, and
we arrive at the theory of pseudo-differential operators, of which (\ref{a3}) is a
particular case. A third option is to work with functions whose Fourier transforms are functions or distributions with compact support, see for 
instance \cite{dybinskii,Tran}. This alternative is clearly the closest to the theory of the present paper, with the main difference being that 
the corresponding functions spaces are much smaller than $Exp$, since they contain no functions with exponential growth.

If $f\in Exp\cap L^2(\R)$ and if $\psi\in\s'$ is the restriction to $\R$ of an entire function, these two avenues coincide with the theory presented 
in this paper. More precisely, by the Paley-Wiener theorem we have that functions in $Exp\cap L^2(\R)$ are band-limited, i.e. $\f^{-1}(f)=\check{f}$ 
has compact support. By (\ref{fourier}) we then have
$$f(x)=\f(\check{f})(x)=\frac{1}{\sqrt{2\pi}}\p\Big(\delta_0(\xi)\check{f}(-\eta)\Big)(x)=\frac{1}{\sqrt{2\pi}}\p\Big(\delta_0(\xi)\hat{f}(\eta)\Big)(x).$$ With $\phi(z)=\psi(-\ii z)$ Theorem \ref{c1} yields
\begin{equation}\label{e3'}\begin{aligned}&\phi\left(\frac{d}{dx}\right)f(x)=\frac{1}{\sqrt{2\pi}}\p\Big(\phi(\zeta) \delta_0(\xi)\hat{f}(\eta)\Big)(x)=\frac{1}{\sqrt{2\pi}}\left\langle \ee^{\zeta x}\phi(\zeta),\delta_0(\xi)\hat{f}(\eta)\right\rangle=\\&=\frac{1}{\sqrt{2\pi}}\int_\R \ee^{\ii \eta x}\phi(\ii \eta) \hat{f}(\eta)~d\eta=\frac{1}{\sqrt{2\pi}}\int_\R \ee^{\ii \eta x}\psi(\eta) \widehat{f}(\eta)~d\eta,\end{aligned}\end{equation}
which is precisely (\ref{a3}). In particular, we note that the limit (\ref{a2'}) indeed exists. It is immediate that a solution to $\phi(\frac{d}{dx})f=g$ is given by the classical formula
\begin{equation}\label{a5}
{f}=\f^{-1}\left(\frac{\widehat{g}}{\psi}\right),
\end{equation}
given that the right hand side is well defined. This clearly runs into problems e.g. if $\psi$ has zeroes on $\R$, an obstacle which is completely absent 
in Theorem \ref{a2''} due to the flexibility in choosing $\mu_g$.

To further highlight some differences, recall Example \ref{b6}. We first remark that with the ``Fourier transform approach'' (\ref{a3}), the corresponding 
symbol is $\psi(x)=2\ii x \cos(x)$. However, for solving (\ref{a1'}) the formula (\ref{a5}) will not work due to the fact that $\cos(x)$ has zeroes
on $\R$. Despite this, there exists a multitude of solutions given by the theory in Sections
\ref{secPart} and \ref{secHom}. Of course, the issue of the zeroes may be overcome by using tricks, but we wish to demonstrate the simplicity and clarity of
the approach developed here. Moreover, we remark that if we depart from real exponents $\zeta_k$ in (\ref{a18}), the
Fourier transform approach \textit{is} out of the picture. 

We now take a look at a more intricate example which highlight both differences and similarities.

\begin{ex}\label{b1}
Consider the heat equation on $\R$, $$\left\{\begin{array}{l}
                                        (\partial_{t}-\partial_x^2)f(x,t)=0,\quad t>0 \\
                                        f(x,0)=g(x)
                                      \end{array}\right.
$$
where $g(x)$ is the initial heat distribution along $\R$. From a naive point of view, treating
$\partial_x^2$ as a number and recalling the theory of ordinary differential equations, the solution
should be
\begin{equation}\label{a6}
f(x,t)=\ee^{t \partial_x^2}g(x).
\end{equation}
Let $g(x)=\chi_{\R^+}$, i.e. the characteristic function of the positive real axis. Using the distribution
theory interpretation of (\ref{a6}), there is no problem defining $\widehat{g}$, and then (\ref{a3}) gives
\begin{equation}\label{a7}
f(x,t)=\frac{1}{\sqrt{2\pi}}\int \ee^{-t\xi^2}\widehat{g}({\xi})\ee^{\ii x\xi}~d\xi
\end{equation}
which is well defined since $\ee^{-t\xi^2}\in\s$. Recall that $\f^{-1}(\ee^{-t\xi^2})=\frac{1}{\sqrt{2t}}\ee^{-x^2/4t}$ and that the 
formula $\f^{-1}(\phi\psi)=\frac{1}{\sqrt{2\pi}}\check{\psi}*\check{\phi}$ also holds if $\psi\in\s$ and $\phi\in\s'$. Hence (\ref{a7}) 
can be recast as
$$f(x,t)=
\frac{1}{\sqrt{2\pi}}\f^{-1}(\ee^{-t\xi^2})*g=\frac{\ee^{-x^2/4t}}{2\sqrt{\pi t}}*\chi_{\R^+},$$
which is the well-known correct solution of the problem; $\frac{\ee^{x^2/4t}}{2\sqrt{\pi t}}$ is an
approximate identity as $t\rightarrow 0$ which satisfies the heat equation for every $t>0$, as is easily
verified. Moreover, the above solution corresponds with our physical intuition; if we have an isolated
infinite bar with temperature 1 on $\R^+$ and 0 on $\R^-$, then the temperature should eventually even
out to 1/2 over the entire bar, which is precisely what happens above.

On the other hand, if we try to interpret (\ref{a6}) via the limit (\ref{a2w}), we obviously get nonsense, for we
then have $\phi(z)=\sum_{k=0}^\infty \frac{t^k}{k!} z^{2k}$ but
$$\frac{\ee^{x^2/4t}}{2\sqrt{\pi}t}*\chi_{\R^+}\neq
\lim_{K\rightarrow\infty}\sum_{k=0}^K  \frac{t^k}{k!} \frac{d^{2k}}{dx^{2k}}\chi_{\R^+}$$
since the right-hand side for each fixed $K$ equals $\chi_{\R^+}$ plus a distribution with support at 0,
which in no sensible topology can converge to $1/2$.

At first sight, this seems to be in contrast with the equivalence of (\ref{a3}) and (\ref{e3'}). However, the complications arise from the 
fact that $\chi_{\R^+}$ does not lie in $Exp$. For $g\in Exp$, the calculations in Section \ref{secCE} shows that (\ref{a6}) reduces to 
\begin{equation}\label{tyg}f(x,t)=\frac{\ee^{-x^2/4t}}{2\sqrt{\pi}t}*g\end{equation} 
also with the functional calculus developed in this paper. More precisely, $\ee^{tz^2}=\int \ee^{-zx}\frac{\ee^{-x^2/4t}}{2\sqrt{\pi t}}$ 
so (\ref{a6}) follows by the calculation (\ref{tyh}).
\end{ex}
$\Box$

\section{On initial value problems}\label{secini}

The search for an appropriate substitute of initial value conditions, (i.e. conditions on a solution $f$
to $\phi\left(\frac{d}{dx}\right)f(x)=g(x)$ given at $x=0$ which uniquely determines $f$), is an important topic in various branches of theoretical physics, 
see for instance \cite{AV,B,BK1,EW,Moeller}.
We argue in this section that no such substitute exists that will work in general, although it is clearly possible for specific choices of $\phi$. 
For example, if $\phi$ has finitely many zeroes, say $N$ counted with multiplicity, then it clearly follows from Theorems \ref{a2''} and \ref{c2} that
\begin{equation}\label{gt57}\left\{\begin{array}{c}
                                     \phi(\frac{d}{dx})f=g,\quad g\in Exp \\
                                     \frac{d^{j-1}}{dx^{j-1}}f(0)=c_j, \quad j=1,\ldots,N
                                   \end{array}
\right.\end{equation}
has a unique solution $f$ in $Exp$ for all fixed $g$ and $c_1,\ldots, c_{N}\in\C$. A general framework including this case has been developed in 
\cite{GPR_JMP, GPR_CQG}. However, 
when $\phi$ has infinitely many zeroes we see that in order to specify $f$ we will need an infinite amount of conditions at $x=0$. It has been observed 
several times before \cite{EW,Moeller}, that if $f$ is analytic this is the same as
uniquely determining $f$ before solving the equation, (since the Taylor series of an entire function defines it uniquely). Another option, as is argued 
e.g. in \cite{B}, is to consider the "initial value problem" 
\begin{equation}\label{gt5'}\left\{\begin{array}{c}
                                     \phi(\frac{d}{dx})f=g \\
                                     f|_I=F
                                   \end{array}
\right.\end{equation} 
(where $f|_I$ means the restriction of $f$ to $I$). Again we run into trouble if we ask that the solution $f\in Exp$, as $F$ then uniquely determines 
$f$ (since we can get the Taylor series at a point in $I$ from $F$) and hence the equation $\phi(\frac{d}{dx})f=g$ again becomes redundant.

Thus, for the question of appropriate IVP's to make sense we have to consider non-entire functions. Let us thus suppose that $\phi$ is such that 
$\phi(\frac{d}{dx})f=g$ has a physically sensible interpretation in some space $\H$ which does not contain only analytic functions, as in Examples 
\ref{b6}, \ref{e2}, \ref{e5} and \ref{b1} and \cite{GPR_JMP, GPR_CQG}. For instance, in Example \ref{e2}, the particular equation considered there 
has a natural interpretation outside of $Exp$, for example in $L^2_{loc}$, and initial data on the interval $I=[0,1)$ uniquely specifies $f$, (by 
setting $f(x+n)=F(x)$ for all $x\in [0,1)$ and $n\in\Z$). Thus, supposing that $f$ is given on $[0,r]$ with $r<1$, the equation
is under-determined and for $r>1$ over-determined. However, the fact that in Example \ref{e2} there exists an interval of a precise length
such that the values of $f$ there uniquely determines $f$ on $\R$, is clearly related to the very simple
structure of the symbol $\phi(z)=\ee^z-1$, and it is bound to fail in general. For example, dividing $e^z-1$ by $z$, we obtain Example \ref{e5} 
(we can keep $\H=L^2_{loc}$). Then the appropriate length of $I$ is clearly still 1, but now the equation
\begin{equation*}\left\{\begin{array}{c}
                                     \phi(\frac{d}{dx})f=0 \\
                                     f|_I=1
                                   \end{array}
\right.\end{equation*}
has no solution, as was shown in Example \ref{e5}. As one takes into account $\phi$'s with less structure, we get more erratic behavior. 
We now take a look at an example where the symbol $\phi$ is such that \begin{equation}\label{eg6}\left\{\begin{array}{c}
                                     \phi(\frac{d}{dx})f=0 \\
                                     f|_I=F
                                   \end{array}
\right.\end{equation}
has a solution \textit{for all $I$ and $F\in L^2(I)$}, but where no interval $I$ is sufficient to \textit{uniquely} determine a solution.

\begin{example}\label{e3}
Let $2m>r_m>m$, $m\in\N$, be numbers such that $\frac{r_m}{r_{m'}}\not \in \mathbb{Q}$ for all pairs $m\neq m'$. Consider the set $$Z= \{r_mn:~{{m\in\N}},~n\in\Z\setminus\{0\}\}.$$
Given any $\rho>1$ we have
$$
\sum_{\zeta\in Z}\frac{1}{|\zeta|^\rho}=
\sum_{n\in\Z\setminus\{0\}} \frac{1}{n^\rho} \sum_{m\in\N}\frac{1}{r_m^\rho}\leq  2\sum_{n\in\N} \frac{1}{n^\rho} \sum_{m\in\N}\frac{1}{m^\rho} <\infty.
$$
By the theory of Hadamard decompositions of analytic functions, there exists an entire function $\phi$
of order 1 with zeroes $Z$ (each with multiplicity 1). By Theorem \ref{c2} the solutions to the equation 
\begin{equation}\label{gt5}\phi(\frac{d}{dx})f=0\end{equation} are given by
\begin{equation}\label{a19}
\Span \cup_{\zeta\in Z} \{\ee^{\zeta x}\},
\end{equation}
where $\Span$ means taking all possible finite linear combinations. However, we will again assume that the nature of the problem (\ref{gt5}) is such 
that we can take limits in $L^2_{loc}$ and still get relevant solutions. We then claim that given any interval $I$ and any $F\in L^2(I)$, the equation 
system (\ref{eg6}) always have infinitely many solutions.

To prove this, we need to recall the classical theorem of Levinson on completeness of bases of exponential functions. Given a discrete set 
$\tilde{Z}\subset \C$ that does not contain 0, we define $n_{\tilde{Z}}(r)=\#\{\zeta\in\tilde{Z}:|\zeta|<r\}$, ($\#$ denotes the number of points 
in a set), and set $$N_{\tilde{Z}}(r)=\int_0^r \frac{n_{\tilde{Z}}(t)}{t}~dt. $$ Given an interval $I$, Levinson's theorem says that 
$$\Span\{\ee^{\zeta x}\}_{\zeta\in\tilde{Z}}$$ is dense in $L^2(I)$ if 
\begin{equation}\label{ju}\limsup_{r\rightarrow\infty} N_{\tilde{Z}}(r)-\frac{|I|r}{\pi}+\frac{\ln r}{2}>-\infty,
\end{equation} 
where $|I|$ denotes the length of $I$ (see e.g. \cite{Boas} or \cite{Young}). We will show that (\ref{ju}) is satisfied for $\tilde{Z}=Z$ and any $I$. 
Assuming this for the moment, our example is complete; for given $F$ on $I$ we can pick a larger interval $J$ and extend $F$ in different ways to $J$. 
Since all these extensions then lie in the $L^2(J)$-closure of $\Span\{\ee^{\zeta x}\}_{\zeta\in Z}$, they are all solutions to (\ref{gt5}) in the 
weaker sense considered here.

Now, (\ref{ju}) follows if we show that given any $C>0$ there exists an $R$ with 
\begin{equation}\label{rf}N_{Z}(r)\geq Cr,\quad r>R.\end{equation}
Set $Z_{m}=\{r_mn:~n\in\Z\setminus\{0\}\}$ so that $Z=\cup_{m\in\N}Z_m$. Then $$n_{Z_m}(r)\geq \frac{2r}{r_m}-2$$ so 
$$N_{Z_m}(r)=\int_{r_m}^r \frac{n_{Z_m}(x)}{x}~dx\geq \int_{r_m}^r \frac{2}{r_m }-\frac{2}{x}~dx=2\left(\frac{r}{r_m}-1-\ln\left(\frac{r}{r_m}\right)\right)$$
for $r>r_m$. It is easy to show that $\ln (x)\leq x/\ee$ so we can continue to get
$$N_{Z_m}(r)\geq\left(2-\frac{2}{\ee}\right)\frac{r}{r_m}-2\geq \left(2-\frac{2}{\ee}\right)\frac{r}{2m}-2.$$ Hence, given any $M\in N$ and $r$ large enough, 
we have $$N_{Z}(r)\geq \sum_{m=1}^M N_{Z_m}(r)\geq \left(2-\frac{2}{\ee}\right)\left(\sum_{m=1}^M\frac{1}{m}\right) r -2M$$ and (\ref{rf}) immediately 
follows since $\sum_{m=1}^M\frac{1}{m}=\infty$. 
\end{example}

$\Box$

The conclusion of Example \ref{e3} seems to imply that \textit{all} functions are solutions to $\phi(\frac{d}{dx})f=0$ as long as we restrict attention 
to a finite interval of arbitrary length. This actualizes the question of whether it makes any sense to study the equation $\phi(\frac{d}{dx})f=g$ for 
general entire functions $\phi$. We end this section with some comments on this. The discussion requires some acquaintance with the theory of entire 
functions, for which we refer to \cite{Boas}. It is easily seen that the symbols $\phi$ in Sections \ref{sectd} and \ref{secCE} are all of order 1 and 
\textit{finite type}. The $\phi$ in Example \ref{e3} still has order 1 but will have \textit{infinite type.} Clearly, the situation in Example \ref{e3} 
is related to $\phi$ having "too many zeroes". The below proposition shows that the phenomenon in Example \ref{e3} occurs for all symbols $\phi$ whose 
exponent of convergence is greater than 1. We note that this includes the majority of all entire functions of order $>1$, since the order and the exponent 
of convergence coincides if the \textit{canonical factor} is dominant in the Hadamard decomposition, (see Theorem 2.6.5 and Sec. 2.7 of \cite{Boas}). 
A notable exception to this is the symbol $\phi(z)=\ee^{z^2}$ considered earlier, which has order two and no zeroes. We remind the reader that the exponent 
of convergence is $>1$ if there exists a $\kappa>1$ such that 
\begin{equation}\label{p0}\sum_{\zeta\in\z(\phi)}\frac{1}{|\zeta|^\kappa}=\infty.\end{equation}

\begin{proposition}\label{tgb}
Let $\phi$ be an entire function such that the exponent of convergence of the zeroes of $\phi$ is $>1$. Then the solutions to $\phi(\frac{d}{dx}) f=0$ 
in $Exp(\R)$ are dense in $L^2(I)$ for any bounded interval $I\subset\R$.
\end{proposition}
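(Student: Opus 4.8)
The plan is to identify the solution space, invoke the Levinson completeness theorem already recalled in Example \ref{e3}, and reduce the whole matter to a growth estimate on the counting integral $N_{\z(\phi)}$. By Theorem \ref{c2}, every solution in $Exp(\R)$ is a finite linear combination of functions $x^j\ee^{\zeta x}$ with $\zeta\in\z(\phi)$ and $j$ less than the multiplicity of $\zeta$, and conversely each such function solves $\phi(\frac{d}{dx})f=0$; in particular $\Span\{\ee^{\zeta x}:~\zeta\in\z(\phi)\}$ is contained in the solution space. Hence it suffices to show that this span is already dense in $L^2(I)$. If $0\in\z(\phi)$ I would simply discard it and work with $Z=\z(\phi)\setminus\{0\}$, which affects neither the density conclusion (a larger span is at least as dense) nor the asymptotics of the counting functions used below. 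Note also that $Z$ is infinite, since a finite zero set has exponent of convergence $0$.

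First I would recall Levinson's sufficient condition (\ref{ju}): density of $\Span\{\ee^{\zeta x}\}_{\zeta\in Z}$ in $L^2(I)$ follows once $\limsup_{r\to\infty} N_Z(r)-|I|r/\pi+\tfrac12\ln r>-\infty$. I claim this holds as soon as $\limsup_{r\to\infty} N_Z(r)/r=\infty$. Indeed, if $N_Z(r_j)/r_j\to\infty$ along some $r_j\to\infty$, then $N_Z(r_j)-|I|r_j/\pi=\big(N_Z(r_j)/r_j-|I|/\pi\big)r_j\to+\infty$, and for large $r_j$ the term $\tfrac12\ln r_j$ is nonnegative and only helps; thus the $\limsup$ in (\ref{ju}) equals $+\infty$. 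So the proposition reduces to the single statement $\limsup_{r\to\infty} N_Z(r)/r=\infty$.

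The heart of the argument is to extract this super-linear growth from the hypothesis (\ref{p0}) that the exponent of convergence exceeds $1$, i.e. that $\sum_{\zeta\in\z(\phi)}|\zeta|^{-\kappa}=\infty$ for some $\kappa>1$. I would argue by contradiction: suppose $N_Z(r)\leq Cr$ for all large $r$. Since $n_Z$ is nondecreasing,
$$2Cr\geq N_Z(2r)\geq N_Z(2r)-N_Z(r)=\int_r^{2r}\frac{n_Z(t)}{t}~dt\geq n_Z(r)\ln 2,$$
so $n_Z(r)=O(r)$. A Stieltjes integration by parts then gives, for any $\kappa>1$,
$$\sum_{\zeta\in Z}\frac{1}{|\zeta|^\kappa}=\int_{r_0}^\infty t^{-\kappa}~dn_Z(t)=\kappa\int_{r_0}^\infty n_Z(t)\,t^{-\kappa-1}~dt,$$
where $r_0=\min_{\zeta}|\zeta|>0$; here the boundary contributions $r_0^{-\kappa}n_Z(r_0)=0$ and $R^{-\kappa}n_Z(R)\to0$ vanish (using $n_Z(t)=O(t)$ and $\kappa>1$), and the remaining integral converges for the same reason. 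Thus $\sum|\zeta|^{-\kappa}<\infty$ for every $\kappa>1$, forcing the exponent of convergence to be $\leq1$ and contradicting the hypothesis. Therefore $\limsup_{r\to\infty} N_Z(r)/r=\infty$, (\ref{ju}) holds, and the span of solutions is dense in $L^2(I)$.

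The main obstacle, and the only genuinely analytic step, is precisely this passage between divergence of $\sum|\zeta|^{-\kappa}$ and the growth of $N_Z$, carried out through the two elementary but essential estimates above: the dyadic lower bound relating $N_Z$ to $n_Z$, and the integration by parts relating $n_Z$ back to the series. A secondary point requiring care is the bookkeeping of multiplicities: if one counts zeros with multiplicity in (\ref{p0}), one should either invoke a version of Levinson's theorem with multiplicities (the extra solutions $x^j\ee^{\zeta x}$ furnished by Theorem \ref{c2} being then exactly what completes the system), or simply observe that the argument above applies \emph{verbatim} with $n_Z$ and $N_Z$ counting multiplicities, the density of the larger system $\{x^j\ee^{\zeta x}\}$ being what is ultimately asserted.
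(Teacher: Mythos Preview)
Your proof is correct and follows essentially the same route as the paper's: reduce to showing $\limsup_{r\to\infty} N_{\z(\phi)}(r)/r=\infty$, then argue by contradiction by deriving $n_{\z(\phi)}(r)=O(r)$ from $N_{\z(\phi)}(r)=O(r)$ via the monotonicity of $n$ on a dyadic (you use factor $2$, the paper uses $\ee$) interval, and finally contradict the divergence hypothesis (\ref{p0}) through the integral $\int n_{\z(\phi)}(t)\,t^{-\kappa-1}\,dt$. The only cosmetic difference is that you spell out the Stieltjes integration by parts linking $\sum|\zeta|^{-\kappa}$ to this integral, whereas the paper cites Lemma~2.5.5 of \cite{Boas}; your version is thus slightly more self-contained, and your closing remark on multiplicities is a welcome clarification the paper omits.
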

\begin{proof}
There is no restriction to assume that $0\not \in \z(\phi)$. It suffices to show that 
\begin{equation}\label{t5}\limsup \frac{N_{\z(\phi)}(r)}{r}=\infty,\end{equation} 
for then (\ref{ju}) holds and hence Levinson's theorem does the job just as in Example \ref{e3}. Suppose that (\ref{t5}) is false. 
Then $N_{\z(\phi)}(r)\leq Cr$ for some $C>0$. Moreover 
\begin{equation}\label{eq0229}\begin{aligned}&n_{\z(\phi)}(r)=n_{\z(\phi)}(r)\ln\ee=n_{\z(\phi)}(r)\int_{r}^{\ee r}\frac{1}{x}~dx\leq\\&
\int_{r}^{\ee r}\frac{n_{\z(\phi)}(x)}{x}~dx\leq N_{\z(\phi)}(\ee r)\leq C\ee r\end{aligned}
.\end{equation}
However, with $\kappa>1$ in (\ref{p0}), Lemma 2.5.5 of \cite{Boas} implies that $$\infty=\int_0^\infty\frac{n_{{\z(\phi)}}(r)}{r^{\kappa+1}} ~dr $$ 
which contradicts (\ref{eq0229}), since $\int_1^\infty \frac{C\ee r}{r^{k+1}}~dr<\infty$.
\end{proof}

\section{Implementation}\label{secImp}

The formula (\ref{a14'}) for solving (\ref{a1'}) is complete from a theoretical perspective, and if the particular application
is such that $g$ is given by an analytical expression, it is perfectly possible to implement it in a
computer. However, if $g$ stems from measured data, (\ref{a14'}) is completely useless. For calculating
derivatives is a highly unstable operation on a measured signal. Calculating $g^{(10)}$ with some acceptable
 accuracy is usually out of the picture, whereas in order to get a good approximation of $\B(g)$ via
(\ref{a10}) we clearly need more than 10 Taylor coefficients. The obvious alternative is to try to get
$\B(g)$ using (\ref{a11}), i.e.
\begin{equation}\label{a11'}
\B(g)(\zeta)=\overline{\vartheta}\int_{0}^\infty g(t\vartheta)\ee^{-\zeta\overline{\vartheta} t}~dt
\end{equation}
but here the obstacle is that (\ref{a11}) can only be evaluated for $\vartheta=1$ or $\vartheta=-1$, since our measurements likely take place in 
the real world! With
\begin{equation}\label{a22}
X_+=\inf\{\xi:~g(t)\ee^{-\xi t}\in L^1(\R^+)\}
\end{equation}
and
\begin{equation}\label{a22'}
X_-=\sup\{\xi:~g(-t)\ee^{\xi t}\in L^1(\R^+)\}
\end{equation}
this will give $\B(g)(\zeta)$ on the domain $$\{\xi+\ii \eta:~\xi >X_+\text{ or } \xi <X_-\},$$ and hence
we find that (\ref{a14'}) can not be used since we lack information on the strip $$\{\xi+\ii \eta:~ X_-\leq \xi \leq X_+\}.$$

The following theorem basically says that if $\phi$ is bounded below on this strip, then (\ref{a1'}) can be solved using the Fourier transform. 
For clarity we indicate the independent variable upon which the Fourier transform acts with a subscript.

\begin{theorem}\label{c5}
Suppose that $\phi$ satisfies
$$\inf\left\{|\phi(\xi+\ii\eta)|:~\xi\in[\xi_-,\xi_+],~|\eta|>Y_0\right\}=\epsilon>0$$
for some $\xi+>X_+$, $\xi_-<X_-$ and $\epsilon,Y_0>0$. Then, given $g\in Exp$, the expression
\begin{equation}\label{a20}\begin{aligned}
f(x)=&\ee^{x\xi_+}\f^{-1}_s\left(\frac{\f_t\left(\chi_{\R^+}\ee^{-\xi_+ t}g\right)(s)}{\phi(\xi_++\ii s)}\right)(x)+\\&+\ee^{x\xi_-}\f^{-1}_s\left(\frac{\f_t\left(\chi_{\R^-}\ee^{-\xi_- t}g\right)(s)}{\phi(\xi_-+\ii s)}\right)(x)\end{aligned}
\end{equation}
provides a solution to $\phi\left(\frac{d}{dx}\right)f(x)=g(x)$ in $Exp$.
\end{theorem}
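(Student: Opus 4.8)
The plan is to recognize the two-term expression (\ref{a20}) as the Corollary \ref{c7} contour-integral solution for a particular contour, namely a tall rectangle $\gamma_Y$ with vertical sides on the lines $\re\zeta=\xi_+$ and $\re\zeta=\xi_-$ and horizontal sides at $\im\zeta=\pm Y$. Corollary \ref{c7} guarantees that any simply connected closed curve enclosing the conjugate diagram $S$ of $g$ and avoiding $\z(\phi)$ produces a solution of (\ref{a1'}) lying in $Exp$; so once the identification is in place the theorem follows at once. This route has the extra virtue of showing $f\in Exp$, which is far from evident from (\ref{a20}) itself, since neither summand is individually the restriction of an entire function --- only their sum is.

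First I would settle the geometry. By the defining formulas (\ref{a22}), (\ref{a22'}) the Laplace representation (\ref{a11'}) of $\B(g)$ converges for $\re\zeta>X_+$ (taking $\vartheta=1$) and for $\re\zeta<X_-$ (taking $\vartheta=-1$); combined with Polya's theorem (recall the remarks following Definition \ref{d5}) this gives that the real projection of $S$ equals $[X_-,X_+]$ (see \cite[Ch.~5]{Boas}). Since $\xi_-<X_-\le X_+<\xi_+$, the set $S$ sits inside the open strip $\{\xi_-<\re\zeta<\xi_+\}$, so $\gamma_Y$ encloses $S$ as soon as $Y$ exceeds the (finite) height of $S$. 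The hypothesis $|\phi|\ge\epsilon$ on $\{\,\xi\in[\xi_-,\xi_+],\ |\eta|>Y_0\,\}$ shows that both horizontal sides and the portions $|\eta|>Y_0$ of the vertical sides are free of zeros of $\phi$; and since $\z(\phi)$ is discrete I may perturb $\xi_\pm$ slightly inward (which only shrinks the strip and hence preserves the hypothesis) so that the remaining segments $\{\re\zeta=\xi_\pm,\ |\eta|\le Y_0\}$ also avoid $\z(\phi)$. Thus for every $Y>\max(Y_0,\text{height of }S)$ the rectangle $\gamma_Y$ is admissible, and $f_Y(x):=\int_{\gamma_Y}\ee^{x\zeta}\B(g)(\zeta)\phi(\zeta)^{-1}\,\frac{d\zeta}{2\pi\ii}$ solves (\ref{a1'}) and lies in $Exp$.

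Next I would pass to the limit $Y\to\infty$. Because every zero of $\phi$ lying in the strip has $|\eta|\le Y_0$, the integrand is holomorphic in the two bands $\{\xi_-<\re\zeta<\xi_+,\ Y<|\eta|<Y'\}$ (there $\B(g)$ is holomorphic, $S$ being already below height $Y$), so Cauchy's theorem yields $f_Y=f_{Y'}$ for all $Y'>Y$; the contour integral is therefore one fixed solution $f$. On the horizontal sides $|\ee^{x\zeta}|\le \ee^{|x|\max(|\xi_+|,|\xi_-|)}$ and $|\phi|\ge\epsilon$, while $|\B(g)(\zeta)|=O(|\zeta|^{-1})=O(Y^{-1})$ by the series (\ref{a10}); since the sides have fixed length $\xi_+-\xi_-$ each horizontal contribution is $O(Y^{-1})$ and vanishes. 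Consequently the two vertical sides, read as the symmetric limits $\lim_{Y\to\infty}\int_{-Y}^{Y}$, add up to $f$.

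Finally I would identify each vertical integral with the matching summand of (\ref{a20}). On $\re\zeta=\xi_+>X_+$ formula (\ref{a11'}) gives $\B(g)(\xi_++\ii s)=\int_0^\infty g(t)\ee^{-(\xi_++\ii s)t}\,dt=\sqrt{2\pi}\,\f_t(\chi_{\R^+}\ee^{-\xi_+ t}g)(s)$; inserting this into the upward right-hand integral, factoring out $\ee^{x\xi_+}$ and reading the remaining $s$-integral as an inverse Fourier transform reproduces the first term of (\ref{a20}) exactly. The left line $\re\zeta=\xi_-<X_-$ is treated the same way through the $\vartheta=-1$ form of (\ref{a11'}), where the substitution $t\mapsto-t$ turns the integral into $-\sqrt{2\pi}\,\f_t(\chi_{\R^-}\ee^{-\xi_- t}g)$ and the downward orientation of the left side supplies a compensating sign, giving the second term. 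The main obstacle, and the reason I route everything through the rectangles rather than the vertical lines directly, is precisely the conditional convergence of these vertical integrals: since $\B(g)(\zeta)\sim b_0/\zeta$ the integrand decays only like $|s|^{-1}$, so absolute convergence fails and one must also guard against the limit in $Y$ silently capturing residues. The hypothesis is exactly what rescues both points, since it confines all the offending zeros of $\phi$ to the compact set $|\eta|\le Y_0$ and keeps $1/\phi$ bounded out at infinity.
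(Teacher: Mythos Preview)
Your proposal is correct and follows essentially the same route as the paper: invoke Corollary \ref{c7} on the rectangle $\Box_Y$, use Cauchy's theorem and the hypothesis on $\phi$ to see the integral is independent of $Y$, kill the horizontal sides via the $O(|\zeta|^{-1})$ decay of $\B(g)$, and identify the vertical sides with the two Fourier terms in (\ref{a20}) through the Laplace representation (\ref{a11'}). Your write-up is in fact more careful than the paper's on two points: you explicitly justify that the conjugate diagram sits in the strip $\xi_-<\re\zeta<\xi_+$, and you flag the issue of possible zeros of $\phi$ on the finite segments $\{\re\zeta=\xi_\pm,\ |\eta|\le Y_0\}$, which the paper passes over. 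One small caveat on your proposed fix: perturbing $\xi_\pm$ inward would literally change the constants appearing in (\ref{a20}), so strictly speaking you would then be proving the theorem for the perturbed values rather than the stated ones; but since the theorem is an existence statement for \emph{some} admissible $\xi_\pm$, and since the hypothesis is open in $\xi_\pm$, this is harmless.
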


We remark that, given a sampling of the function $g$, the corresponding discrete version of (\ref{a20})
can be efficiently implemented using the Fast Fourier Transform \cite{FFT}.

\begin{proof}
By Corollary \ref{c7} we have that
\begin{equation}\label{a21}
f(x)=\int_{\Box_Y} \ee^{x\zeta}\frac{\B(g)(\zeta)}{\phi(\zeta)}~\frac{d\zeta}{2\pi \ii}.
\end{equation}
gives a solution, where $\Box_Y$ represents the rectangle with corners $\{\xi_+\pm\ii Y, \xi_-\pm\ii Y \}$ and $Y>Y_0$ is a parameter sufficiently large
that the sides are outside the conjugate diagram $S$ of $\B(f)$, and such that such that $\Box_Y\cap \z(\phi)$.
By the assumption on $\phi$ and Cauchy's theorem, (\ref{a21}) does not depend on $Y$, and hence we can take a limit as $Y\rightarrow\infty$. By
the equivalence of (\ref{a11'}), it is easy to see that the two vertical integrals
evaluate to
$$
\ee^{x\xi_+}\int_{-Y}^{Y} \ee^{\ii\eta x}
\frac{\f_t\left(\chi_{\R^+}\ee^{-\xi_+ t}g\right)(\eta)}{\phi(\xi_++\ii\eta)}~\frac{d\eta}{\sqrt{2\pi}}+\ee^{x\xi_-}\int_{-Y}^{Y} \ee^{\ii\eta x}\frac{\f_t\left(\chi_{\R^-}\ee^{-\xi_- t}g\right)(\eta)}{\phi(\xi_-+\ii\eta)}~\frac{d\eta}{\sqrt{2\pi}},$$
whose limit as $Y\rightarrow\infty$ equals (\ref{a20}). Hence (\ref{a20}) follows if we prove that the
horizontal integrals converge to 0 as $Y\rightarrow\infty$. We consider the upper one.
For each fixed $x\in\R$ we have
\begin{align*}&\left|-\int_{\xi_-}^{\xi_+} \ee^{x(\xi+\ii Y)}\frac{\B(g)(\xi+\ii Y)}{\phi(\xi+\ii Y)}~\frac{d\xi}{2\pi \ii}\right|\leq\\& \leq\frac{1}{2\pi\epsilon} \sup_{\xi_-<\xi<\xi_+} \left|\ee^{x\xi}\B(g)(\xi+\ii Y)\right|\leq const\cdot \sup_{\xi_-<\xi<\xi_+} |\B(g)(\xi+\ii Y)|.\end{align*}
Since $\B(g)$ is defined by a Laurant-series starting at $\zeta^{-1}$, see (\ref{a10}), it easily follows
that $$|\B(g)(\zeta)|\leq const/|\zeta|$$ for large $|\zeta|$, and hence
$\lim_{Y\rightarrow\infty} \sup_{\xi_-<\xi<\xi_+} |\B(g)(\xi+\ii Y)|=0,$ as desired.
\end{proof}

An interesting note is that although the above theorem was derived under the assumption that $g\in Exp$,
the formula (\ref{a20}) works for any function $g$ such that $X_+$ and $X_-$, as defined in (\ref{a22})
and (\ref{a22'}), are finite. For $g$ in $Exp$, $f$ given by (\ref{a20}) is a solution in the sense of
(\ref{a2w}). However, as was observed in Example \ref{b1}, this interpretation of our equation (\ref{a2w})
may be too restrictive for the underlying physical reality, and hence it is certainly worthwhile to
consider formula (\ref{a20}) even for examples where (\ref{a2w}) is too restrictive.

Theorem \ref{c5} typically applies to e.g. the translation-differential equations considered in Section \ref{sectd}, but it never applies to 
the convolution equations considered in Section \ref{secCE}. However, if $g$ e.g. is a band limited $L^2$ function (i.e. a function in the 
Paley-Wiener space), then we have $X_+=X_-=0$ and hence $\B(g)$ can be evaluated at all points of $\Box_Y$ if $Y$ is large enough. 
Thus (\ref{a21}) provides a computable solution as is, given that $Y$ is taken large enough.

\paragraph{\bf Acknowledgments}
M. Carlsson was supported by DICYT-USACH and STINT; H. Prado's research was partially supported by DICYT-USACH\# 041133PC and FONDECYT\# 1130554; 
E.G. Reyes' research was partially supported by FONDECYT grant \# 1111042.

Matematikcentrum, Lund University\\
Box 118, 221 00 Lund, Sweden\\ E-mail: Marcus.Carlsson@math.lu.se\\

\smallskip

Departamento de Matem\'atica y Ciencia de la
Computaci\'on, \\
Universidad de Santiago de Chile\\
Casilla 307 Correo 2, Santiago, Chile\\
E-mail: humberto.prado@usach.cl ; ereyes@fermat.usach.cl ; e\_g\_reyes@yahoo.ca
\end{document}